\def\eqref#1{equation~\ref{#1}}
\def\1{\bm{1}}
\DeclareMathAlphabet{\mathsfit}{\encodingdefault}{\sfdefault}{m}{sl}
\SetMathAlphabet{\mathsfit}{bold}{\encodingdefault}{\sfdefault}{bx}{n}
\newcommand{\Alphabet}{\mathcal{A}}
\newcommand{\M}{\mathcal{M}} % for motif
\newcommand{\X}{\mathbf{x}} % for structure (random variable)
\newcommand{\Xval}{x} % for structure (value it takes)
\newcommand{\Seqval}{s} % for sequence (value it takes
\newcommand{\Sc}{\mathcal{S}} % for scaffold
\newcommand{\scTM}{\texttt{scTM}}
\newcommand{\dt}[2]{#1^{(#2)}} % for diffusion step
\newcommand{\ptheta}{p_\theta} % for diffusion step
\newcommand{\replace}{\mathrm{Repl}}
\newcommand{\modelname}{\texttt{ProtDiff}}
\newcommand{\protmpnn}{\texttt{ProteinMPNN}}
\newcommand{\samplingname}{\texttt{SMCDiff}}
\newcommand{\pall}[2]{p_{#1}(#2)}
\newcommand{\preplace}[2]{p^\replace_{#1}(#2)}
\newcommand{\preplacescaf}[2]{p^\replace_{\Sc,#1}(#2)}
\newcommand{\pmotif}[2]{p_{\M,#1}(#2)}
\newcommand{\pscaf}[2]{p_{\Sc,#1}(#2)}
\newcommand{\qall}[2]{q_{#1}(#2)}
\newcommand{\qmotif}[2]{q_{\M,#1}(#2)}
\newcommand{\qscaf}[2]{q_{\Sc,#1}(#2)}
\newcommand{\KL}[2]{\mathrm{KL}\left[#1 \| #2\right]}
\newcommand{\EKL}[2]{\mathrm{EKL}\left[#1 \| #2\right]}
\newcommand{\R}{\mathbb{R}}
\newcommand{\pr}{\mathbb{P}}
\newcommand{\Var}{\mathrm{Var}}
\newcommand{\inv}{^{-1}}
\def\[#1\]{\begin{align}#1\end{align}}        % numbered
\def\(#1\){\begin{align*}#1\end{align*}}     % unnumbered
\newcommand{\revision}[1]{#1}
\newtheorem{theorem}{Theorem}[section]
\newtheorem{corollary}[theorem]{Corollary}
\newtheorem{prop}[theorem]{Proposition}
\newtheorem{lemma}{Lemma}[section]
\theoremstyle{remark}
\title{Diffusion probabilistic modeling of protein backbones in 3D for the motif-scaffolding problem}
\author{%
  Brian L. Trippe\thanks{Contributed equally to this work.}$^{\ \ }$\thanks{Massachusetts Institute of Technology}\\
  \texttt{btrippe@mit.edu} \\
  % examples of more authors
 \And
    Jason Yim$^{*\dagger}$ \\
  \texttt{jyim@mit.edu} \\
 \And
    Doug Tischer \thanks{University of Washington}\\
  \texttt{dtischer@uw.edu}\\
 \And
  David Baker$^{\ddagger}$\\
  \texttt{dabaker@uw.edu} \\
 \And
  Tamara Broderick$^{\dagger}$\\
  \texttt{tbroderick@mit.edu} \\
 \And
  Regina Barzilay$^{\dagger}$\\
  \texttt{regina@csail.mit.edu} \\
 \And
  Tommi Jaakkola$^{\dagger}$\\
  \texttt{tommi@csail.mit.edu} \\
}
\begin{document}

\maketitle

\doparttoc % Tell to minitoc to generate a toc for the parts
\faketableofcontents % Run a fake tableofcontents command for the partocs

%\part{} % Start the document part
%\parttoc % Insert the document TOC

%\vspace{-45pt}

\begin{abstract}
% Version on 2022-05-16
Construction of a \emph{scaffold} structure that supports a desired \emph{motif}, conferring protein function, shows promise for the design of vaccines and enzymes.
But a general solution to this motif-scaffolding problem remains open.
Current machine-learning techniques for scaffold design are either limited to unrealistically small scaffolds (up to length 20) or struggle to produce multiple diverse scaffolds.
We propose to learn a distribution over diverse and longer protein backbone structures via an E(3)-equivariant graph neural network.
We develop \samplingname{} to efficiently sample scaffolds from this distribution conditioned on a given motif; our algorithm is the first to theoretically guarantee conditional samples from a diffusion model in the large-compute limit.
We evaluate our designed backbones by how well they align with AlphaFold2-predicted structures.
We show that our method can (1) sample scaffolds up to 80 residues and (2) achieve structurally diverse scaffolds for a fixed motif.\footnote{Code: \url{https://github.com/blt2114/ProtDiff_SMCDiff}}
\looseness=-1

\end{abstract}
\setcounter{footnote}{0} 

\section{Introduction}\label{sec:intro}
A central task in protein design is creation of a stable \textit{scaffold} to support a target \textit{motif}. 
\revision{Here, motifs are structural protein fragments imparting biological function while scaffolds stabilize the motif's structure.}
Vaccines and enzymes have already been designed by solving certain instances of this \emph{motif-scaffolding} problem \citep{procko2014computationally, correia2014proof, jiang2008novo, siegel2010computational}.
%\revision{\emph{Motifs} are structural protein fragments imparting biological function while \emph{scaffolds} are additional protein structures surrounding the motif to stabilize the motif's shape and folding. For example, the Fab domain of an antibody can be thought of as having two parts: strands in a $\beta$-sandwich (the scaffold) which support the binding loops (the motif).}
%A central task in protein design is creation of a stable scaffold to support a target motif.
%Vaccines and enzymes have already been designed by solving certain instances of this \emph{motif-scaffolding} problem 
However, successful solutions to this problem in the past have necessitated substantial expert involvement and laborious trial and error. Machine learning (ML) offers the hope to automate, and better direct this search. 
But existing ML approaches face one of two major roadblocks.
\revision{First, many methods} do not build scaffolds longer than  about 20 residues.
For many motif sizes of interest, the resulting proteins would be smaller than the shortest commonly-studied simple protein folds (35--40 residues) \citep{gelman2014fast}.
\revision{Second, while other methods may generate longer scaffolds using stochastic search techniques, they} require hours of computation to generate a single plausible scaffold \citep{wang2021deep,anishchenko2021novo,tischer2020design}. Moreover, when a plausible scaffold is found, it remains to be experimentally validated. Therefore, it is desirable to return not just a single scaffold but rather a set of scaffolds exhibiting diverse sequences and structural variation to increase the likelihood of success in practice.

In the present work, we demonstrate the promise of a particular generative modeling approach within ML for efficiently returning a diverse set of motif-supporting scaffolds.
Generative models have been shown to capture a distribution over diverse protein structures \citep{lin2021deep}. But it is not clear how to handle conditioning (on the motif) using these approaches. Diffusion probabilistic models (DPMs) offer a potential alternative; not only do they provide a more straightforward path to handling conditioning, but they have also enjoyed success generating small-molecules in 3D \citep{hoogeboom2022equivariant}. Extending DPMs to protein structures, though, is non-trivial; since proteins are larger than small molecules, modeling proteins requires handling the sequential ordering of residues and long-range interactions. Finally, while existing models often generate distance matrices \citep{anand2018generative,lin2021deep}, we instead focus on generating a full set of 3D coordinates, which should improve designability in practice. Our resulting model, \modelname{}, is similar to concurrent work on E(3)-equivariant diffusion models for molecules \citep{hoogeboom2022equivariant}, but with modifications specific to protein structure. 
Moreover, we develop a novel motif-scaffolding procedure based on Sequential Monte Carlo, \samplingname{}, that repurposes an unconditionally trained DPM for conditional sampling.
% In our case, we condition on the motif structure, a task analogous to \emph{inpainting} \citep{saharia2022palette}.
\revision{We prove that if a DPM matches the data distribution, \samplingname{} is guaranteed to provide \emph{exact} conditional samples in a large-compute limit;
this property contrasts with} previous methods \citep{song2020score, zhou20213d}, which we show introduce non-trivial approximation error that impedes performance.
Our final motif-scaffolding generative framework, then, has two steps (\cref{fig:schematic}): first we train \modelname{} to learn a distribution over protein backbones, and then we use \samplingname{} with \modelname{} to inpaint arbitrary motifs.

Ours is the first machine-learning method to construct scaffolds
%Our results demonstrate the first instance of machine learning methods constructing motif-scaffolds
longer than 20 residues around motifs --- we build up to 80 residues scaffolds on a test case.
Beyond our progress on the motif-scaffolding problem, we provide the following technical contributions:
(1) we introduce a protein-backbone generative model in 3D -- with the ability to generate backbone samples that structurally agree with \texttt{AlphaFold2} predictions, and
(2) we develop a novel conditional sampling algorithm for inpainting.

\begin{figure}
\centering
\includegraphics[width=\textwidth]{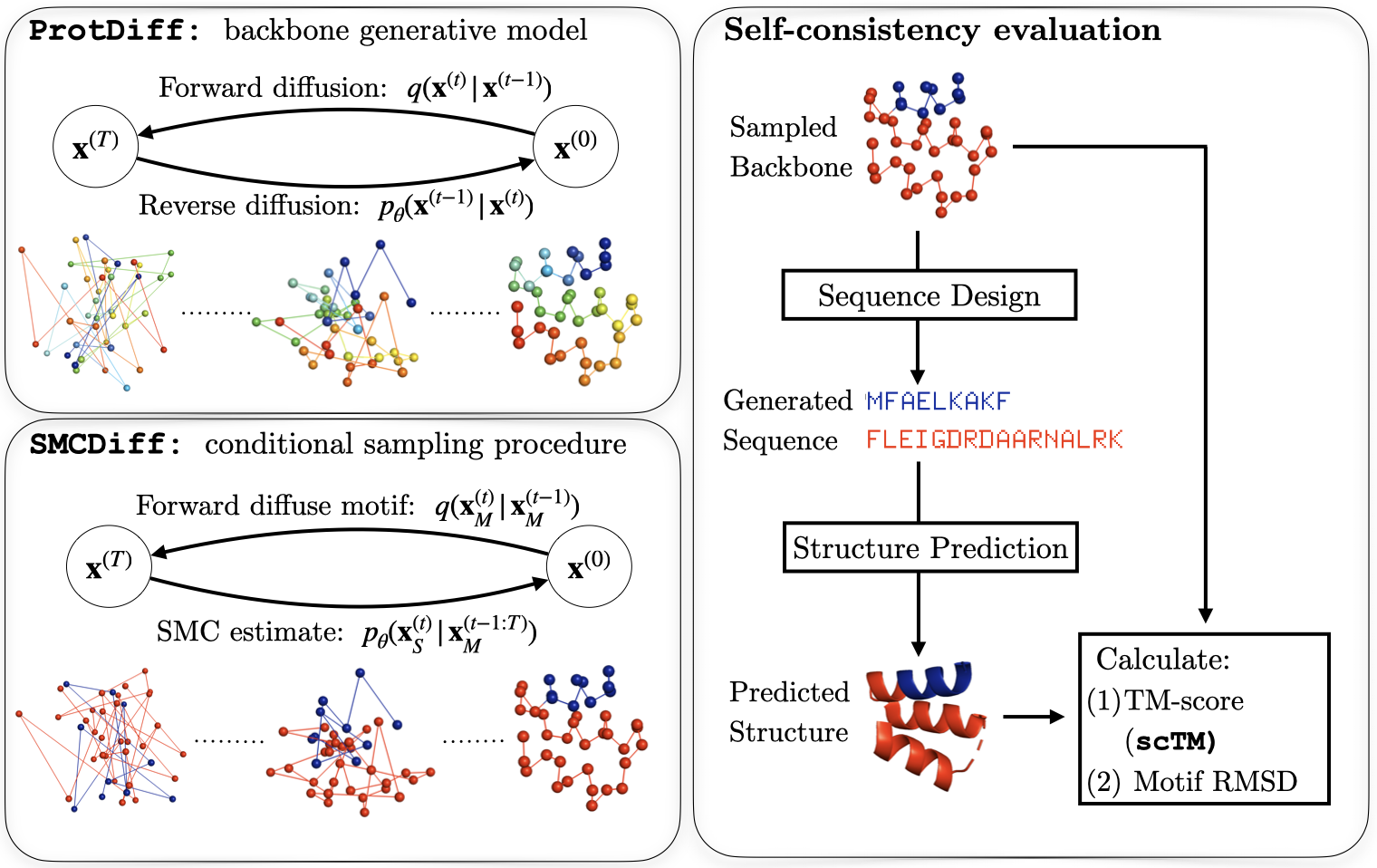}
\caption{
Overview of the conditional generative modeling approach to the motif-scaffolding problem. We train our new protein backbone diffusion model, \modelname, to generate realistic protein backbone structures. Next, we run \samplingname, our conditional sampling algorithm, with \modelname{} to generate scaffolds (colored in red) conditioned on the motif (colored in blue). 
% Training: a protein backbone diffusion model (\modelname) is trained to generate realistic protein backbone structures. Sampling: once trained, we use a conditional sampling algorithm (\samplingname) to generate scaffolds (colored in red) conditioned on the motif (colored in blue).
For self-consistency evaluation, we use a pretrained fixed-backbone sequence-design model (\protmpnn{} \citep{dauparas2022protmpnn}) to generate the scaffold sequence from a sampled backbone.
We then input the sequence to a structure prediction model, in our case \texttt{AlphaFold2} (AF2) \citep{Jumper2021HighlyAP}, to generate the full protein structure from the generated sequence.
We compare the backbone of the predicted structure with the original backbone structure using TM-score \citep{xu2010significant} and root-mean-square-distance (RMSD) for the motif.
}\label{fig:schematic}
\end{figure}

\subsection{Related work}\label{sec:related}

\textbf{Motif-scaffolding}. Past approaches have sought to scaffold a motif with native or prespecified protein fragments, but are limited to finding a suitable match in the Protein Data Bank (PDB) and cannot adapt the scaffold to compensate for slight structural mismatches \citep{cao2021robust,silva2016motif,yang2021bottom,sesterhenn2020novo,linsky2020novo}.
More recently \citet{wang2021deep} used pre-trained protein structure prediction networks to recapitulate native scaffolds, but this method failed to generate scaffolds longer than 20 residues and can output only a single candidate scaffold rather than a diverse set.
By contrast, our goal is to sample diverse, long scaffolds.

\textbf{Diffusion models for molecule generation}.
Several concurrent works have extended equivariant diffusion models to molecule generation.
\cite{anand2022protein} extended diffusion models for generation of protein backbone frames and sequences \textit{conditioned} on secondary structure adjacency matrices. Similarly, \cite{luo2022antigen} focused on CDR-loop generation using diffusion models conditioned on non-CDR regions of the antibody-antigen. Our method does not require conditioning and is applicable to general proteins.
\cite{lee2022proteinsgm} approach the same problem as our work but build diffusion models over 2D distances matrices that requires post-processing to produce 3D structures through Rosetta minimization. We demonstrate capability of diffusion models to directly model 3D coordinates of proteins.
\citet{hoogeboom2022equivariant} developed an equivariant diffusion model (EDM) for generating small molecules in 3D.
However, because EDM does not enforce a spatial ordering of the atoms that compose small molecules, as we describe in \Cref{sec:experiments}, it does not learn a coherent chain structure as needed in proteins.

% \textbf{3D molecule generation}. \citet{hoogeboom2022equivariant} concurrently developed an equivariant diffusion model (EDM) for generating molecules in 3D. EDM generates both 3D coordinates and atomic types.
% However, because EDM does not enforce a spatial ordering of the atoms that compose small molecules, as we describe in \Cref{sec:experiments}, it does not learn a coherent chain structure as needed in proteins.

%RB instead of "fail" say "has challenges" more neutral, the same meaning
%As described in \cref{sec:method}, \modelname{} is closely related to EDM. Instead of jointly modeling sequence and backbone, we make use of the recent fixed backbone sequence design methods and simplify our problem to only generating the 3D coordinates.
\looseness=-1

\textbf{Inpainting and conditional sampling in diffusion models}. Point-Voxel Diffusion (PVD) \citep{zhou20213d} is a 3D diffusion model for generating shapes from the ShapeNet dataset. Though trained to generate shapes unconditionally, 
PVD completes (or \emph{inpaints}) full shapes when a partial point cloud is fixed during inference.
For general diffusion models, \citet{song2020score} proposed an alternative inpainting approach and remarked that this approach produces \emph{approximate} conditional samples.
However, these methods do not provide theoretical guarantees,
and when we compare them to \samplingname{}, we find that their approximation error impedes performance when applied to motif-scaffolding.
%We compare \samplingname{} to both methods in \Cref{sec:conditional_sampling} - showing \samplingname{} has better theoretical justification and empirical results for motif-scaffolding.
\citet{saharia2022palette} developed an inpainting diffusion model by training a diffusion model to denoise randomly generated masked regions while unmasked regions were unperturbed. However, their approach requires a detailed data augmentation strategy that does not exist for proteins.
% Another class of works involve conditional diffusion which attempt to control the diffusion process with either latents or external scores from a classifier (\citet{ilvr, ramesh2022hierarchical, sinha2021d2c}). The assumptions made by these methods, all on images, do not directly transfer to proteins.

We describe additional related work on protein generative models in \Cref{sec:additional_related}.

\section{Preliminaries}\label{sec:preliminaries}

\subsection{The motif-scaffolding problem}\label{sec:motif_scaffolding_formulation}
A protein can be represented by its amino acid sequence and backbone structure.
Let $\Alphabet$ be the set of 20 genetically-encoded amino acids.
We denote the sequence of an $N$-residue protein by $\Seqval \in \Alphabet^N$ and its C-$\alpha$ backbone coordinates in 3D by $\X = [\X_1,\dots,\X_N]^T \in \R^{N,3}$.
%\revision{Generally speaking, a protein's ground state conformation} 
\revision{We describe a protein as having a fixed structure} that is a function of its sequence,
so we may write $\X(\Seqval)$. 
We divide the $N$ residues into the functional motif $\M$ and the scaffold $\Sc,$ such that $\M \cup \Sc = \{1, 2,\dots, N\}.$
%It will be convenient to write $\X = ( \X_\M, \X_\Sc)$.
The goal is to identify, given the motif structure $\X_\M$, sequences $\Seqval$ 
whose structure recapitulates the motif to high precision $\X(\Seqval)_\M \approx \X_\M$.
%In particular, we will say that $S \epsSats (S_\M, X_\M)$ if $\RMSD\left(X_M, \structure{S}[k:k+M]\right) <\epsilon$,
% here $\RMSD$ is defined for the best alignment (e.g.\ as computed via the Kabsch algorithm ).\footnote{\BLT{Add a textbook reference for Kabsch algorithm.}}
\Cref{sec:supp_caveats} discusses several caveats of this simplified framing \revision{(e.g.\ our assumption of static structures)}.
%The fundamental challenge is that there is an exponentially large number of possible scaffold sequences and we can't test them all.

%\textbf{Generative modeling framing of motif scaffolding}.
%%In light of recent solutions to the fixed backbone sequence design problem,
%%we focus on finding compatible scaffold backbones, $X_\Sc$, and take a generative modeling approach.
%We let $q(\cdot)$ denote a distribution supported on designable backbone that can be segmented into a motif and scaffold $\Xval=(\Xval_\M, \Xval_\Sc)$.
%% For example, $q(\X)$ could be the backbones of all naturally occurring proteins.
%If a motif of interest is within the support of $q(\Xval_\M),$ then the conditional $q(\Xval_\Sc\mid x_\M)$ is supported on designable scaffolds that can complete the motif.
%A machine learning approach to the motif scaffolding problem can be divided into two steps as shown in \cref{fig:schematic}.
%In the first step, we fit a generative model $p_\theta$ to $q$. Once trained, we may sample possible scaffolds from $p_\theta(\Xval_\Sc\mid \Xval_\M)$ and use existing backbone sequence methods to sample the scaffold sequence $\Seqval_\Sc$ which can then be ran through a protein folding model to generate the full atomic protein structure.

\subsection{Diffusion probabilistic models}
\label{sec:dpm}
Our approach to the motif-scaffolding problem builds on denoising diffusion probabilistic models (DPMs) \citep{sohl2015deep}.
We follow the conventions and notation set by \citet{ho2020denoising}, which we review here.
DPMs are a class of generative models based on a reversible, discrete-time diffusion process.
The \emph{forward process} starts with a sample $\dt{\X}{0}$ from an unknown data distribution $q$, with density denoted by $q(\dt{\X}{0}),$
and iteratively adds noise at each step $t$.
By the last step, $T$, the distribution of $\dt{\X}{T}$ is indistinguishable from an isotropic Gaussian: $\dt{\X}{T} \sim \mathcal{N}(\dt{\X}{T}; 0, \mathbf{I}).$
Specifically, we choose a variance schedule $\dt{\beta}{1}, \dt{\beta}{2},\dots, \dt{\beta}{T}$, and define the transition distribution at step $t$ as $q(\dt{\X}{t} \mid \dt{\X}{t-1}) = \mathcal{N}(\dt{\X}{t};\sqrt{1 - \dt{\beta}{t}}\dt{\X}{t-1}, \dt{\beta}{t} \mathbf{I})$.

DPMs approximate $q$ with a second distribution $\ptheta$ by learning the transition distribution of the \emph{reverse process} at each $t,$ $\ptheta(\dt{\X}{t-1} \mid \dt{\X}{t}).$
We follow the conventions set by \citet{ho2020denoising} in our parameterization and choice of objective.
In particular, we take $\ptheta(\dt{\X}{t-1}\mid \dt{\X}{t})= \mathcal{N}(\dt{\X}{t-1}; \mu_\theta(\dt{\X}{t}, t), \dt{\beta}{t}\mathbf{I})$ 
with
$\mu_\theta(\dt{\X}{t}, t)=\frac{1}{\sqrt{\dt{\alpha}{t}}}\left(\dt{\X}{t} - \frac{\dt{\beta}{t}}{\sqrt{1-\dt{\bar{\alpha}}{t}}}\epsilon_\theta(\dt{\X}{t},t)\right),$
$\dt{\alpha}{t}:=1-\dt{\beta}{t}$, and $\dt{\bar\alpha}{t} := \prod_{s=1}^t \dt{\alpha}{t}.$
We implement $\epsilon_\theta(\dt{\X}{t},t)$ as a neural network.
For training, we marginally sample $\dt{\X}{t}\sim q(\dt{\X}{t}\mid \dt{\X}{0})$ from the forward process as $\dt{\X}{t}=\sqrt{\dt{\bar \alpha}{t}}\dt{\X}{0} + \sqrt{1-\dt{\bar \alpha}{t}}\epsilon$ and minimize the  objective 
$T^{-1} \sum_{t=1}^T \mathbb{E}_{q(\dt{\X}{0},\dt{\X}{t})} \left[\|\epsilon - \epsilon_\theta(\dt{\X}{t}, t)\|^2\right]$ by stochastic optimization \citep[Algorithm 1]{ho2020denoising}.
To generate samples from $\ptheta(\dt{\X}{0}),$  we simulate the reverse process.
That is, we sample noise for time $T$ as $\dt{\X}{T} \sim \mathcal{N}(0, \mathbf{I}),$ 
and then for each $t=T-1,\dots,0,$  we simulate progressively ``de-noised'' samples as $\dt{\X}{t}\sim \ptheta(\dt{\X}{t}\mid \dt{\X}{t+1}).$

\section{\modelname: A diffusion model of protein backbones in 3D}\label{sec:method}
Implementation of diffusion probabilistic models requires choosing an architecture for the neural network $\epsilon_\theta(\X^{(t)}, t)$ introduced abstractly in \Cref{sec:dpm}.
In this section we describe \modelname, which corresponds to the choice of $\epsilon_\theta(\X^{(t)}, t)$ as a translation and rotation equivariant graph neural network tailored to modeling protein backbones.
\revision{We leave architectural and input encoding details to \cref{sec:supp_protdiff}.}

\textbf{The challenge of modeling points in 3D.}
The properties and functions of proteins are dictated by the relative geometry of their residues, and are invariant to the coordinate system chosen to encode them.
Recent work on neural network modeling of 3D data has found, both theoretically and empirically, that neural networks constrained to satisfy geometric invariances can provide inductive biases that improve generalization and training efficiency \citep{batzner2022egnn}.
Motivated by this observation, we parameterize $\epsilon_\theta$ by an equivariant graph neural network (EGNN) \citep{satorras2021en}, which in 3D is equivariant to transformations in the Euclidean group.
\citet{xu2022geodiff} proved that if $\epsilon_\theta$ is equivariant to a group then $\ptheta$ is invariant to the same group.
% Furthermore, \citet{xu2022geodiff} observed that one may impart geometric invariances to a DPM $\ptheta$ by constructing $\epsilon_\theta$ to the corresponding equivariances.
% Motivated by this observation, we parameterize $\epsilon_\theta$ by an equivariant graph neural network (EGNN)\citep{satorras2021en}, which in 3D is equivariant to transformations in the Euclidean group.

%A consequence of EGNN that is equivariant to rotations, translations and reflections.
%by incorporating sequence distance features and 
%that we tailor to work on protein backbone structures.

\textbf{Tailoring EGNN to protein backbones.}
We now describe our EGNN implementation, which we tailor to protein backbones and DPMs through the choice of edge and node features.
%Proteins backbones comprise of a long, linear chain and have structures strongly determined by interactions between residues distant from one another in the chain.
To model every pairwise residue interaction, we represent backbones by a fully connected graph.
% Because non-local interactions between residues are integral to protein structure, we represent protein backbones by a densely connected graph.
%To accommodate all possible non-local interactions in a graph neural network, we represent protein backbones by a \emph{densely connected} graph.
Each node in the graph is indexed by $n=1,\dots, N,$ and corresponds to a residue.
We associate each node with coordinates $\X_n \in \R^3$ and $D$ features $h_n \in\R^D.$
For each pair of nodes $n, n^\prime$ we define an edge and associate it with edge features. % $a_{nn^\prime}\in \R^D.$
We construct our EGNN by stacking $L$ equivariant graph convolutional layers (EGCL)  %\revision{previously described in \cite{satorras2021en} (see \cref{sec:supp_protdiff} for details)}
. 
Each layer takes node coordinates and features as input, and outputs updated coordinates and features with \revision{the first layer taking initial values $(\X, h)$.}
% We take the input to the first layer to be the initial coordinates and features $(\X^0, h^0)=(\X, h).$
% Each layer $l=1, \dots, L$ defines an update as $(\X^{l}, h^{l})=\text{EGCL}[\X^{l-1}, h^{l-1}]$ where for each node $n$
% \begin{gather*}
%     \label{eq:egnn}
%     \X_n^{l} = \X_n^{l-1} + \sum_{{n^\prime} \ne n} \vec \omega_{nn^\prime}\cdot\phi_\X(h_n^{l-1}, h_{n^\prime}^{l-1}, d_{n{n^\prime}}, a_{n{n^\prime}}) \quad \text{and}\quad
%     h_n^{l} = \phi_h(h_n^{l-1}, m_n), \quad \text{for}\\
%     \vec \omega_{nn^\prime} = \frac{\X_n^{l-1} - \X_{n^\prime}^{l-1}}{\sqrt{d_{nn^\prime}} + \gamma}, \quad 
%     m_n = \sum_{n^\prime \neq n}\phi_e(h_n^{l-1}, h_{n^\prime}^{l-1}, d_{n{n^\prime}}, a_{nn^\prime}),\text{ and } d_{nn^\prime} = \|\X_n^{l-1} - \X_{n^\prime}^{l-1}\|_2^2.
% \end{gather*}
% $\phi_e, \phi_h$, and  $\phi_\X$ are fully connected neural networks,
% and $\gamma$ is a small positive constant included for numerical stability.
We write the output of EGNN after $L$ layers as $\hat{\X} = \text{EGNN}[\X, h]$.
In the context of diffusion models, we predict the noise at time $t$ with the following parameterization:
\begin{equation}
    \epsilon_\theta(\dt{\X}{t}, t) = \hat{\X} - \dt{\X}{t}, \quad \hat{\X} = \text{EGNN}[\dt{\X}{t}, h(t)].
      %\\
    %\dt{\Xval}{t}_{\text{scaled}} = \frac{\sqrt{\dt{\bar \alpha}{t}}\dt{\Xval}{0}}{10} + \sqrt{1-\dt{\bar \alpha}{t}}\epsilon \label{eq:scaling}
    \label{eq:protdiff}
\end{equation}
%\revision{Input protein coordinates are represented in nanometers and translated to have center of mass at the origin.}

We now describe our choice of node and edge features.
Our choice is motivated by the linear chain structure of protein backbones;
residues close in sequence are necessarily close in 3D space.
To allow this chain constraint to be learned more easily, we fix an ordering of nodes in the graph to correspond to sequence order.
We include as edge features positional offsets as done in \cite{ingraham2019generative}, which we represent using sinusoidal positional encoding features \citep{vaswani2017transformer}.
% as
% \begin{align*}
%     a_{nn^\prime} = \begin{bmatrix}\varphi(n-n^\prime, 1)\\
%     \vdots \\
%     \varphi(n-n^\prime, D)
%     \end{bmatrix}, \text{ where }\
%     \varphi(x, k) = \begin{cases}
%     \text{sin}\left(x\cdot \pi/ N^{2\cdot k/D}\right) ,& k\mod 2 = 0 \\
%     \text{cos}\left(x\cdot \pi/ N^{2\cdot (k-1)/D}\right) ,& k\mod 2 = 1.
%     \end{cases}
%     % \label{eq:pos_feats}
% \end{align*}
For node features, we similarly use a sinusoidal encoding of sequence position as well as of the diffusion time step $t$ following \citet{kingma2021vdm}.
We additionally process the time encoding to be orthogonal to the positional encoding
%\revision{(see \cref{sec:supp_protdiff} for details)}
.

\section{\samplingname:  Conditional sampling in diffusion models by particle filtering}\label{sec:conditional_sampling}
The second stage of our generative modeling approach to the motif-scaffolding problem is to sample scaffolds $\dt{\X_\Sc}{0}$ from 
$\ptheta(\dt{\X_\Sc}{0} \mid \dt{\X_\M}{0})$.
\Cref{sec:replacement} discusses the intractability of sampling from $\ptheta(\dt{\X_\Sc}{0}\mid \dt{\X_\M}{0})$ exactly
and the limitations of a simple approximation introduced by \citep{song2020score}.
In \Cref{sec:smc}, we then frame computation of $\ptheta(\dt{\X_\Sc}{0} | \dt{\X_\M}{0})$ as a sequential Monte Carlo (SMC) problem \citep{doucet2001sequential}
and approximate it with a particle filtering algorithm (\Cref{alg:particle}).

\subsection{The challenge of conditional sampling and the error of the replacement method}\label{sec:replacement}
The conditional distributions of a DPM are defined implicitly through the steps of the reverse process.
We may write the conditional density explicitly as
\(
\ptheta(\dt{\X_\Sc}{0}\mid \dt{\X_M}{0})\propto \ptheta(\dt{\X_\Sc}{0}, \dt{\X_M}{0})
&= \ptheta(\dt{\X}{0}) = \int \ptheta(\dt{\X}{T}) \prod_{t=0}^{T-1}\ptheta(\dt{\X}{t}\mid \dt{\X}{t+1}) d\dt{\X}{1:T}.
\)
However, the high-dimensional integral on the right-hand side above is intractable (both analytically and numerically) to compute.

\begin{wrapfigure}{r}{0.5\textwidth}
\vspace{-27pt}
% \begin{figure}[!t]
\begin{minipage}{0.50\textwidth}
\begin{algorithm}[H]
\centering
\caption{
    \samplingname: Particle filtering for conditionally sampling from unconditional diffusion models
}
\label{alg:particle}
\begin{algorithmic}[1]
% %\LinesNumbered
    \State{\textbf{Input:} $\dt{\X_\M}{0}$ (motif), $K$ (\# particles)}
    \State{\text{// Forward diffuse motif}}
    \State $\dt{\breve \X_\M}{1:T} \sim q(\dt{\X_\M}{1:T} \mid \dt{\X_\M}{0})$\\
    
    \State \text{// Reverse diffuse particles }
    \State {$\forall k,\  \dt{\X_{k}}{T} \overset{i.i.d.}{\sim} \ptheta( \dt{\X}{T})$}
    \For{$t = T,\dots, 1$}
        \State{\text{// \emph{Replace} motif} }
        \State $\forall k,\  \dt{\X_{k}}{t} \leftarrow [\dt{\breve \X_{\M}}{t}, \dt{\X_{\Sc,k}}{t}]$\\
        \State{\text{// Re-weight based on }$\dt{\breve \X_\M}{t-1}$}
        \State{$\forall k,\  \dt{w_{k}}{t}\leftarrow \ptheta(\dt{\breve \X_\M}{t-1}\mid \dt{\X_k}{t})$}
        \State{$\forall k, \ \dt{\tilde{w}_{k}}{t}\leftarrow \dt{w_{k}}{t} / \sum_{k^\prime=1}^K \dt{w_{k^\prime}}{t}$}
        %\State{$\dt{\pr_{\M,K}}{t} \leftarrow \sum_{k=1}^K \dt{\tilde w_{k}}{t} \delta_{\dt{\X_{\Sc,k}}{t}} $}
        \State $\dt{\tilde \X_{1:K}}{t} \sim \texttt{Resample}(\dt{\tilde{w}_{1:K}}{t}, \dt{ \X_{1:K}}{t})$ \\
        \State {\text{// Propose next step}}
        \State {$\forall k,\  \dt{\X_k}{t-1} \overset{indep.}{\sim} \ptheta(\dt{\X}{t-1} \mid \dt{\tilde \X_{k}}{t})$}
    \EndFor 
\State \text{Return} $\dt{\X_{\Sc,1:K}}{0} $
\end{algorithmic}
\end{algorithm}
\end{minipage}
% \end{figure}
\vspace{-40pt}
\end{wrapfigure}

To overcome this intractability, we build on the work of \citet{song2020score}, who introduced a practical algorithm that generates \emph{approximate} conditional samples.
This strategy is to (1) forward diffuse the conditioning variable to obtain $\dt{\X_\M}{1:T}\sim q(\dt{\X_\M}{1:T}\mid \dt{\X_\M}{0}),$
and then (2) for each $t,$ sample $\dt{\X_\Sc}{t} \sim \ptheta(\dt{\X_\Sc}{t} \mid  \dt{\X_\M}{t+1},\dt{\X_\Sc}{t+1}).$
We call this approach the \emph{replacement method} (following \citet{ho2022video}) and make it explicit in Appendix \Cref{alg:replacement}.
However, in \Cref{prop:KL_div} we show that the replacement method introduces irreducible approximation error that cannot be eliminated by making $p_\theta$ more expressive.
Additionally, although this approximation error is not analytically tractable in general, we show in \Cref{corollary:diverging_error} the dependence of this error on the covariance of $\dt{\X_\M}{0}$ and $\dt{\X_\Sc}{0}$ in the case that $q(\dt{\X_\M}{0}, \dt{\X_\Sc}{0})$ is bivariate Gaussian.
\subsection{Conditional sampling is a sequential Monte Carlo problem}\label{sec:smc}
We next frame approximation of $q(\dt{\X_\Sc}{0} \mid \dt{\X_\M}{0})$ as a sequential Monte Carlo problem that we may solve by particle filtering.
Intuitively, particle filtering addresses a limitation of the replacement method: the failure at each time $t$ to look beyond the current step to the less-noised motif $\dt{\X_\M}{t-1}$ when sampling $\dt{\X_\Sc}{t} \sim \ptheta(\dt{\X_\Sc}{t} \mid \dt{\X}{t+1}).$
Our key insight is that because $\ptheta(\dt{\X_\M}{t-1}\mid \dt{\X}{t})$ provides a mechanism to assess the likelihood of $\dt{\X_\M}{t-1}$, we can prioritize noised scaffolds that are more consistent with the motif.
Particle filtering leverages this mechanism to provide a sequence of discrete approximations to each $\ptheta( \dt{\X_\Sc}{t}\mid \dt{\X_\M}{t-1:T})$ that look ahead by this extra step.
Finally, at $t=0$ we have an approximation to $\ptheta(\dt{\X_\Sc}{0}\mid \dt{\X_\M}{0:T}).$
Then, using \Cref{prop:asymptotic_accuracy} below,  we can obtain an approximate sample from $q(\dt{\X_\Sc}{0}\mid \dt{\X_\M}{0}).$
This framing permits the application of standard particle filtering algorithms \citep{doucet2001sequential}.
\Cref{alg:particle} summarizes an implementation of this procedure that uses residual resampling \citep{doucet2009tutorial} to mitigate the collapse of the sequential approximations into point masses.
\samplingname{} provides a tunable trade-off between computational cost and statistical accuracy through the choice of the number of particles $K$.
In our next proposition we make this trade-off explicit.

%Let $\{X_{\Sc,k}\}_{k=1}^K$ be the samples returned by Algorithm 2 and define $p_{\M,K}:=\frac{1}{K} \sum_{k=1}^K \delta_{X_{\Sc,k}}$ be an empiric measure supported on these samples, where $\delta_{\X}$ is a Dirac measure on $\X.$
%Note that $p_{\M,K}$ is a random probability measure.

\begin{prop}\label{prop:asymptotic_accuracy}
Suppose that $\ptheta$ exactly matches the forward diffusion process 
such that for every $\dt{\X}{t+1}, \, \ptheta(\dt{\X}{t} \mid \dt{\X}{t+1}) = q(\dt{\X}{t} \mid \dt{\X}{t+1})$ 
and consider any motif $\dt{\X_\M}{0}.$
Let $\X_{\Sc, K}$ be a particle chosen at random from the output of \Cref{alg:particle} with $K$ particles.
Then $\X_{\Sc, K}$ converges in distribution to $q(\dt{\X_\Sc}{0}\mid \dt{\X_\M}{0})$
as $K$ goes to infinity.
\end{prop}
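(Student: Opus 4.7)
The plan is to cast \Cref{alg:particle} as a Sequential Monte Carlo (SMC) particle filter targeting the conditional distribution $q(\dt{\X_\Sc}{0:T}\mid \dt{\X_\M}{0:T})$ with $\dt{\X_\M}{0:T}$ fixed to the forward-diffused sequence $\dt{\breve \X_\M}{0:T}$, and then to invoke a standard SMC consistency theorem.

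First, I would observe that because the forward kernel $q(\dt{\X}{t}\mid \dt{\X}{t-1})$ factors independently over the motif and scaffold coordinates, the trajectory $\dt{\X_\M}{1:T}$ is conditionally independent of $\dt{\X_\Sc}{0}$ given $\dt{\X_\M}{0}$, so $q(\dt{\X_\Sc}{0}\mid \dt{\X_\M}{0:T}) = q(\dt{\X_\Sc}{0}\mid \dt{\X_\M}{0})$. It therefore suffices to show that, conditional on any realization of $\dt{\breve \X_\M}{1:T}$ drawn from $q(\dt{\X_\M}{1:T}\mid \dt{\X_\M}{0})$, a uniformly-chosen output particle converges in distribution to $q(\dt{\X_\Sc}{0}\mid \dt{\X_\M}{0:T} = \dt{\breve \X_\M}{0:T})$; marginalizing over $\dt{\breve \X_\M}{1:T}$ then recovers the claim.

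Next, I would introduce intermediate targets $\pi_t(\dt{\X_\Sc}{t:T}) \propto q(\dt{\X_\Sc}{t:T}, \dt{\X_\M}{(t-1):T} = \dt{\breve \X_\M}{(t-1):T})$ for $t = T, \ldots, 1$, with a terminal target at $t=0$ whose $\dt{\X_\Sc}{0}$-marginal is the desired posterior. Under the hypothesis $\ptheta = q$ and using the Markov property $q(\dt{\X}{t-1}\mid \dt{\X}{t:T}) = q(\dt{\X}{t-1}\mid \dt{\X}{t})$, a Bayes-rule calculation identifies $\pi_{t-1}/\pi_t$ as the product of the algorithm's incremental weight $\dt{w_k}{t} = \ptheta(\dt{\breve \X_\M}{t-1}\mid \dt{\X_k}{t})$ and the proposal $\ptheta(\dt{\X}{t-1}\mid \dt{\X}{t})$ followed by motif replacement, establishing \Cref{alg:particle} as a valid particle filter for this target sequence.

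Finally, a standard convergence result for particle filters with residual resampling gives weak convergence of the weighted empirical measure to the final target as $K\to\infty$, so a uniformly sampled particle converges in distribution to $q(\dt{\X_\Sc}{0}\mid \dt{\X_\M}{0:T}) = q(\dt{\X_\Sc}{0}\mid \dt{\X_\M}{0})$. The main obstacle will be the middle paragraph: because $\dt{\X_\M}{t-1}$ and $\dt{\X_\Sc}{t-1}$ are not conditionally independent given $\dt{\X}{t}$ under the reverse process, careful bookkeeping is required when factorizing the reverse transition to verify that the pre-sampling weight combined with the blind reverse proposal and the motif-replacement step really does reproduce the target ratio $\pi_{t-1}/\pi_t$.
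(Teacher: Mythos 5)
Your proposal follows essentially the same route as the paper's proof: reduce to sampling from $q(\dt{\X_\Sc}{0}\mid\dt{\X_\M}{0:T})$ for the fixed forward-diffused motif trajectory, set up the intermediate targets $\pi_t \propto q(\dt{\X_\Sc}{t:T}\mid \dt{\X_\M}{t-1:T})$ as a Feynman--Kac model whose potentials are the weights $\ptheta(\dt{\breve\X_\M}{t-1}\mid\dt{\X}{t})$ and whose kernels are the scaffold marginals of the reverse transitions, and then invoke standard particle-filter asymptotics (the paper uses Proposition 11.4 of Chopin and Papaspiliopoulos, with the residual-resampling extension). The ``obstacle'' you flag at the end dissolves because $\ptheta(\cdot\mid\dt{\X}{t})$ is a diagonal Gaussian and hence factorizes across $\M$ and $\Sc$ by construction, so the pre-sampling weight multiplied by the blind scaffold proposal exactly reproduces the joint reverse transition with the motif clamped, which is what the target ratio requires.
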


The significance of \Cref{prop:asymptotic_accuracy} is that it guarantees \Cref{alg:particle} can provide arbitrarily accurate conditional samples provided \revision{an accurate diffusion model and} large enough compute budget (determined by the number of particles).
\revision{To our knowledge, \samplingname{} is the first algorithm for asymptotically exact conditionally sampling from unconditional DPMs}. % that can provide arbitrary accuracy.
% To our knowledge, \samplingname{} is the first algorithm for conditionally sampling from DPMs that can provide arbitrary accuracy.
Our proof of the proposition, which we leave to \Cref{sec:smc_appendix}, is obtained from an application of standard asymptotics for particle filtering \citep[Proposition 11.4]{chopin2020introduction}.

% \begin{wrapfigure}[11]{r}{0.5\textwidth}

\section{Experiments}\label{sec:experiments}
We empirically demonstrate the ability of our method to scaffold motifs and sample protein backbone structures.
We describe our procedure for evaluating backbone designs in \Cref{sec:sc}.
We demonstrate the promise of our method for the motif-scaffolding problem in \Cref{sec:inpainting}. And we investigate our method's strengths and weaknesses via experiments in unconditional sampling in \Cref{sec:unconditional_sampling}. 
We train a single instance of \modelname{} and use it across all of our experiments. 
For simplicity, we limited our training data to single chain proteins taken from PDB that are no longer than 128 residues.
See \Cref{sec:training} for training details.
% Code for \modelname{} and \samplingname{} is included in the supplementary material and will be made available through github.
% We intend to make a later version of \modelname{} and \samplingname{} available on github.

\textbf{Baselines.}
As mentioned in \Cref{sec:related}, \citet{wang2021deep} is the only prior machine learning work to address the motif-scaffolding problem.
We do not compare against this as a baseline because no stable implementation was available at the time of writing.
The most closely related method for unconditional sampling with available software is trDesign \citep{anishchenko2021novo}, but this method does not allow specification of a motif.
The ML method most similar to \modelname{} is the concurrently developed equivariant diffusion model (EDM) proposed by \citet{hoogeboom2022equivariant}.
Like \modelname, EDM uses a densely connected EGNN architecture but without sequence-distance edge features.
Consequently, it does not impose any sequence order, and therefore does not yield a way to relate generated coordinates to a backbone chain.
%, and we found empirically fails to fit small backbones and does not provide coherent chain structure.

\subsection{\emph{In silico} evaluation of designed backbones}
\label{sec:sc}
While experimental validation via X-ray crystallography remains the gold standard for evaluating computationally designed proteins,
recent work \citep{wang2021deep,lin2021deep} has proposed to leverage highly accurate protein structure prediction neural networks as an \emph{in silico} proxy for true structure.
More specifically, \citet{wang2021deep} jointly design protein sequence and structure, and validate by comparing the design and \texttt{AlphaFold2} (AF2) \citep{Jumper2021HighlyAP} predicted structures. 
%Protein designs ultimately need to undergo experimental validation to assess their fitness and function, but this is costly and time-consuming. 
% such as pLDDT which is believed to correlate with stable proteins \cite{tunyasuvunakool2021highly}.
% We report pLDDT of our designs in \cref{sec:additional_metrics}.
Here, our goal is to assess the quality of scaffolds generated independent of a specific sequence,
so we treat fixed backbone sequence design as a downstream step as in \citet{lin2021deep}.

Our evaluation with AF2 is as follows. For each generated scaffold we use a C-$\alpha$ only version of \protmpnn{}  \citep{dauparas2022protmpnn}
with a temperature of 0.1 to sample 8 amino acid sequences likely to fold to the same backbone structure.
We then run AF2 with the released CASP14\footnote{
Biannual protein folding competition where AF2 achieved first place. Weights available under Apache License 2.0 license.}
weights and 15 recycling iterations.
We do not include a multiple sequence alignment as an input to AF2.
\revision{Our choice of utilizing \protmpnn{} and AF2 (without MSAs) is motivated by their empirical success in various de novo protein design tasks and the ability to recapitulate native proteins \citep{dauparas2022protmpnn, bennett2022improving}}.
To assess unconditionally sampled scaffolds, we then evaluate the agreement of our backbone sample with the AF2 predicted structures using the maximum TM-score \citep{zhang2005tm} across all generated sequences which we refer to as \scTM, for \emph{self-consistency} TM-score.
To assess whether prospective scaffolds generated support a motif, we compute the root mean squared distances (RMSD) of the desired and predicted motif coordinates after alignment and refer this metric as the \emph{motif RMSD}.
Appendix \cref{alg:self_consistency} outlines the exact steps.

Because a TM-score > 0.5 indicates that two structures have the same fold \citep{zhang2005tm},
we say that a backbone is designable if \scTM{} > 0.5. The ability for AF2 to reproduce the same backbone from an independently designed sequence is evidence a sequence can be found for the starting structure. \revision{To verify this is a reasonable cutoff, we analyzed \scTM{} over our training set and found 87\% to be designable.}

% A more thorough discussion of \scTM{} is presented in \cref{sec:additional_metrics} \TODO{Do we want to promise this?}.
%In addition, many proteins contain disordered regions that would incur high RMSD between two conformations of the same sequence. Our \scRMSD calculation cannot account for this and lead to many false negatives of designable backbones. On one hand, it is desirable to design hyperstable proteins, but prohibits from designing disordered binding regions. We leave this as a direction of future work.

% \BLT{add explanation of how (in constrast to Doug & Jue's paper, FB paper, which use this post-hoc to assess if design was "on target", we use this as a systematic evaluation criterion for methods / populations of generated backbones.}

% \TODO{May move to appendix} At a high-level, \scRMSD \ calculation can be thought of as a noisy channel: ProtMPNN attempts to compress the 3D backbone into a 1D sequence which AF2 uses to recover the original 3D backbone. We argue a low \scRMSD (defined as <3 \AA) would indicate the backbone is \emph{designable}, a sequence can be found that would fold into the desired backbone structure. Our argument relies on the fact most random sequences do not fold into a structure and physically impossible backbones would not map to a sequence \cite{anishchenko2021novo}. The ability of ProtMPNN to infer a sequence for a backbone and then have it be recapitulated by AF2 is statistically significant especially considering ProtMPNN and AF2 were independently trained.

\subsection{Motif-scaffolding via conditional sampling}
We evaluated our motif-scaffolding approach (combining \samplingname{} and \modelname{}) on motifs extracted from existing proteins in the PDB and found that our approach can generate long and diverse scaffolds that support these motifs.
\begin{figure}
\includegraphics[width=1.0\textwidth]{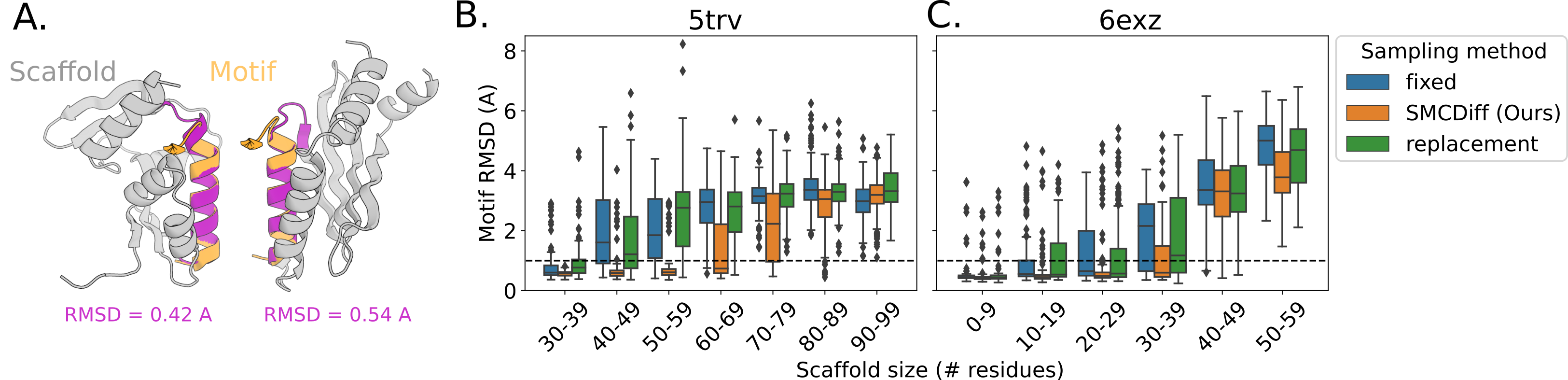}\
\caption{
Motif-scaffolding case studies.
    (A) Example of two scaffold structures generated around a segment of \texttt{5trv}.  Orange: desired input motif, Grey: AlphaFold-predicted structure of two scaffolds, with the motif highlighted (purple).  Both scaffolds were sampled using \samplingname{} with \scTM{} > 0.5. 
    (B,C) Motif RMSD for 5trv and 6exz test cases, its dependence on scaffold size, and comparison of \samplingname{} to two naive inpainting methods (\texttt{fixed}, \texttt{replacement}).
}\label{fig:inpainting}
\end{figure}
We chose to first evaluate on motifs extracted from proteins present in the training set because we knew that at least one stabilizing scaffold exists.
We considered 2 examples 
taken from the PDB with IDs 6exz and 5trv, which are 69 and 118 residues long, respectively.
\revision{We chose these examples due to their high secondary structure composition while being representative of the shortest and longest lengths seen during training.}
For each \revision{structure}, we chose a 15--25 \revision{residue helical} segment as the motif (see \Cref{sec:additional_results} for details).
The remainder of each protein is one possible supporting scaffold.
We sought to assess if we could recover this and other scaffolds with the same size and motif placement.

Based on prior work \citep{wang2021deep},
we expected that building larger scaffolds around a motif would be more challenging than building smaller scaffolds.
To assess this length dependence, we expanded the segment of used as the motif when running \samplingname{} by including additional residues on each side.
In each case, though, we compute the motif RMSD over the minimal motif.
In \Cref{fig:inpainting}B, we present motif-scaffolding performance and its dependence on scaffold size for 5trv, the longer of the two test proteins.
For the 5trv test case, the lower quartile of the motif RMSD for \samplingname{} is below 1\AA{} for scaffolds up to 80 residues.
Since 1\AA{} is atomic-level resolution, we conclude that our approach can succeed in this length range.

\Cref{fig:inpainting}A provides a visualization of our method's capacity to generate long and diverse scaffolds. 
The figure
depicts two dissimilar scaffolds of lengths 34 and 54 produced by \samplingname{} with 64 particles.
Both scaffolds are designable and agree with AF2 (\scTM{} > 0.5).
Diversity is particularly evident in the different orderings of secondary structures.

\Cref{fig:inpainting}B compares \samplingname{} to two naive inpainting methods, \texttt{fixed} and \texttt{replacement}.
In \texttt{fixed}, the motif is fixed for every timestep $t,$ and the reverse diffusion is applied only to the scaffold (as done by \citet{zhou20213d});
\texttt{replacement} is the method described in \Cref{sec:conditional_sampling}.
In contrast to \samplingname{}, these baselines fail to generate a successful scaffolds longer than 50 residues on 5trv, as determined by the location of their lower quartiles.

We next applied these three inpainting methods to harder target\revision{s in order to measure generalization to out-of-distribution and more difficult motifs comprising of dis-contiguous regions and loops}.
We consider a motif obtained from the respiratory syncytial virus (RSV) protein \revision{ and calcium binding EF-hand motif, both of which are not in the training dataset. RSV is known to be difficult due to its composition of helical, loop, and sheet segments, while EF-hand is a dis-contiguous loop motif found in a calcium binding protein. More details about both motifs can be found in \cite{wang2021deep}; there the authors report the only known successful scaffold of these motifs but they attain it with a computationally intensive \emph{hallucination} approach.}
% This motif has been successfully scaffolded by the  of \citep{wang2021deep}.
We found that our method failed to generate scaffolds predicted to recapitulate the motif (\Cref{sec:additional_results})\revision{; however, \samplingname{} provided smaller median motif RMSDs than the other two inpainting methods.}
% Our results indicate that our method does not improve uniformly upon this previous approach.

\textbf{Compute cost.} The computation of \samplingname{} with 64 particles is approximately 2 minutes per independent sample, while alternative methods \texttt{fixed} and \texttt{replacement} can produce 64 independent samples in the same time.
By contrast, the hallucination approach of \cite{wang2021deep} involves running a Markov chain for thousands of steps, and has runtime on the order of hours for a single sample \citep{anishchenko2021novo}.
\label{sec:inpainting}

\subsection{Unconditional sampling}
\begin{figure}
\centering
\includegraphics[width=1.0\textwidth]{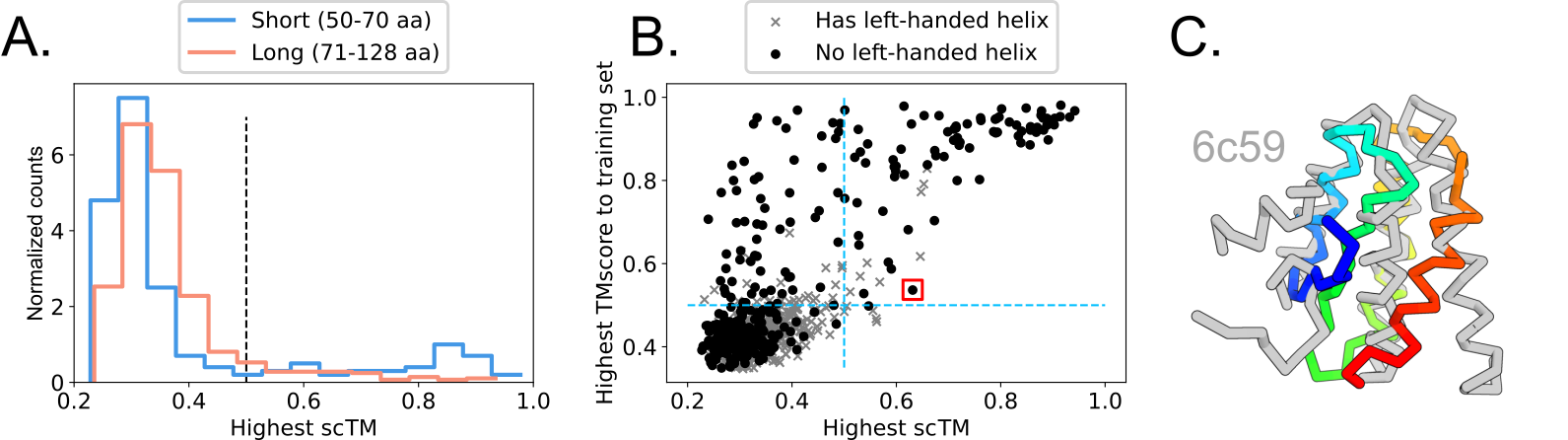}\
    \caption{
    Protein backbone samples from \modelname. (A) Density plot of \scTM{} for different length categories (50--70, 70--128). The dashed line at \scTM{} = 0.5 indicates the threshold of "designability", points to the right are considered "designable" (see text).  (B) Scatter plot of \scTM{} and the highest TM-score of each sample to all of PDB. Points represented as a grey ``$\times$'' are detected to contain an (invalid) left-handed helix. Dashed lines indicate thresholds \scTM{} = 0.5.
    (C) Example of a designable backbone sample (rainbow) with \scTM{} > 0.5 (boxed in red in panel B) to its closest PDB example (6c59, grey) with a TM-score of 0.54. 
}\label{fig:unconditional}
\end{figure}
We next investigate the origins of the diversity seen in \Cref{fig:inpainting} by analyzing the diversity and designability of \modelname{} samples without conditioning on a motif. 

We first check that \modelname{} produces designable backbones.
To do this, we generated 10 backbone samples for each length between 50 and 128 and then calculated \scTM{} for each sample.
In \cref{fig:unconditional}A, we find that 11.8\% of samples have \scTM{} > 0.5.
However, the majority of backbones do not pass this threshold.
We also observe designability has strong dependence on length since we expect that longer proteins are harder to model in 3D and design sequences for.
We separated the lengths below 128 residues into two categories and refer to them as \emph{short} (50--70) and \emph{long} (70--128).
Our results in \Cref{fig:unconditional}A indicate 17\% of designs in the short category are designable vs.\ 9\% in the long category.
In \Cref{fig:clustering}, we present a structural clustering of these designable backbones;
we find that these backbones exhibit diverse topologes. 
% In \Cref{sec:additional_results}, we find that this trend agrees with a secondary measure of performance, using RMSD instead of TMscore.

We next sought to evaluate the ability of \modelname{} to generalize beyond the training set and produce novel backbones.
In \Cref{fig:unconditional}B each point represents a backbone sample from \modelname. The horizontal coordinate of a point is the \scTM{}, and the vertical coordinate is the minimum TM-score across the training set.
%RB I think "strongly" should be "strong"
We found a strong positive correlation between \scTM{} and this minimum TM-score, indicating that many of the most designable backbones generated by \modelname{} were a result of training set memorization.
However, if the model were only memorizing the training set, we would see TM-scores consistently near 1.0; the range of scores in \Cref{fig:unconditional}B indicate this is not the case -- and the model is introducing a degree of variability.
\Cref{fig:unconditional}C gives an example of backbone with \scTM{} > 0.5 that appears to be novel. 
Its closest match in the PDB has TM-score = 0.54.

\cref{fig:unconditional}B illustrates a limitation of our method: many of our sampled backbones are not designable.
One contributing factor is that \modelname{} does not handle chirality. Hence \modelname{} generates backbones with the wrong handedness, which cannot be realized by any sequence.
\cref{fig:unconditional}B shows that 45\% of all backbone samples had at least one incorrect, left-handed helix.
Of these, most have \scTM{} < 0.5.
We describe calculating left-handed helices in 
\cref{sec:additional_metrics}.

\cref{fig:latent_interpolation} illustrates an interpolation between two samples, showing how \modelname's outputs change as a function of the noise used to generate them.
To generate these interpolations, we pick two backbone samples that result in different folds.
For independent samples generated with noise $\dt{\epsilon}{0:T}$ and $\dt{\tilde \epsilon}{0:T}$ we interpolate with noise set to $\sqrt{\alpha}\dt{\epsilon}{0:T} + \sqrt{1-\alpha}\dt{\tilde \epsilon}{0:T}$ for $\alpha$ between 0 and 1.
The depicted values of $\alpha$ are chosen to highlight transition points with full interpolations included in \Cref{sec:additional_interpolations}.
A future direction is to exploit the latent structure of \modelname{} to control backbone topology.

\begin{figure}
\centering
\includegraphics[width=1.0\textwidth]{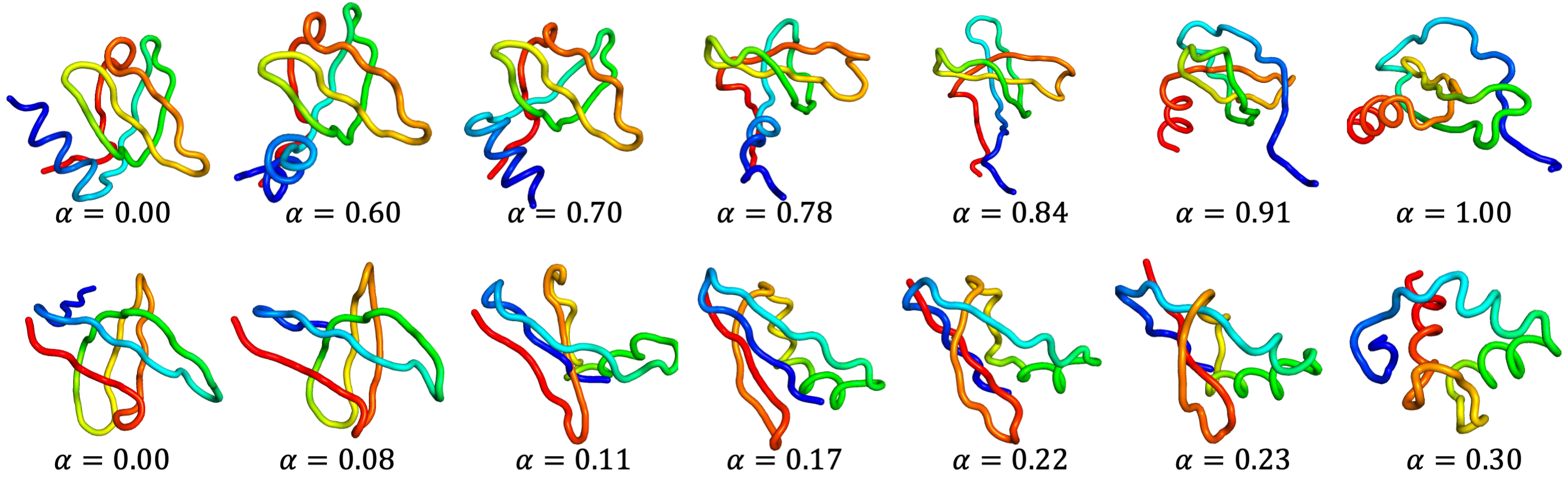}
    \caption{
    Interpolations between \modelname{} samples demonstrating the diversity of backbones captured.
    Top: 64-residue example.  Bottom: 56-residue example.
    \modelname{} samples are determined by the Gaussian noise across all steps, $\dt{\epsilon}{0:T}.$
}\label{fig:latent_interpolation}
\end{figure}

\label{sec:unconditional_sampling}

% \subsection{Dataset details}\label{sec:data}
% \input{data}

\section{Discussion}\label{sec:discussion}
The motif-scaffolding problem has applications ranging from medicine to material science \citep{king2012computational}, but
remains unsolved for many functional motifs.
We have created the first generative modeling approach to motif-scaffolding by developing \modelname{}, a diffusion probabilistic model of protein backbones,
and \samplingname{}, a procedure for generating scaffolds conditioned on a motif.
Although our experiments were limited to a small set of proteins, our results demonstrate that our procedure is the first capable of generating diverse scaffolds longer than 20 residues with computation time reliably on the order of minutes or less.
Our work demonstrates the potential of machine learning methods to be applied in realistic protein design settings.

\revision{\textbf{General conditional sampling}. \samplingname{} is applicable to generic DPMs and is not limited to only proteins and motif-scaffolding.
While we do not make claims of \samplingname{} outperforming state-of-the-art conditional diffusion models on other tasks such as image generation, we demonstrate a clear advantage of \samplingname{} over the replacement method on a toy task of inpainting MNIST images in \cref{sec:mnist_inpaint}.
Extending \samplingname{} outside of motif-scaffolding is outside the scope of the present work, but the advantages of a single model for both unconditional and conditional generation warrants additional research.}

\textbf{Modeling limitations}. Our present results do not indicate our procedure can generalize to motifs that are not present in the training set.
We believe improvements in protein modeling could provide better inductive biases for generalization.
\revision{\modelname{}, based on EGNN, is reflection equivariant since it only sees pairwise distances between 3D C-$\alpha$ coordinates.}
% For example, \modelname{} sees only 3D backbone coordinates and is blind to the the orientation of backbone residues
% As a consequence, \modelname{} is reflection symmetric and generates left-handed helices (\Cref{sec:chirality}), which do not stably occur in natural proteins.
%Designing a generative model capable of detecting chirality is a non-trivial task and is not the focus of our work.
Additionally, \modelname{} does not explicitly model primary sequence or side-chains.
\citet{hoogeboom2022equivariant} demonstrate the benefits of modeling sequence information in small molecules; joint modeling sequence and structure in a single model could improve the designability of protein scaffolds and backbones as well.

\textbf{Data limitations}. We remarked our training set is small due to filtering based on length and oligometry (using only monomeric proteins).
Scaling up to longer proteins opens up thousands more examples from the PDB, but in preliminary experiments has proven challenging.
Lastly, further development and comparison of methods for motif scaffolding will benefit from standard evaluation benchmarks.
\revision{Developing a benchmark proved to be difficult since motifs are not labeled in protein databases. It will be important to gather motifs of biological importance in order to guide ML method development towards real-world applications.}
Because no such benchmarks exist, developing them is a valuable direction for future work.

% \textbf{Ethical considerations.} Protein engineering has potential negative societal impact through advancing biotechnology.
% Our work and results address a computational problem that could be used in beneficial or harmful applications of biotechnology.

\subsubsection*{Acknowledgements}
The authors thank Octavian-Eugen Ganea, Hannes St\"{a}rk, Wenxian Shi, Felix Faltings, Jeremy Wohlwend, Nitan Shalon, Gabriele Corso, Sean Murphy, Wengong Jin, Bowen Jing, Renato Berlinghieri, John Yang, and Jue Wang for helpful discussion and feedback.
We thank Justas Dauparas for access to an early version of \protmpnn{}.
We dedicate this work in memory of Octavian-Eugen Ganea who initiated the project by connecting all the authors. 

BLT and JY were supported in part by NSF-GRFP.
BLT and TB were supported in part by NSF grant 2029016 and an ONR Early Career Grant. JY, RB, and TJ acknowledge support from NSF Expeditions grant (award 1918839: Collaborative Research: Understanding the World Through Code), Machine Learning for Pharmaceutical Discovery and Synthesis (MLPDS) consortium, the Abdul Latif Jameel
Clinic for Machine Learning in Health, the DTRA Discovery of Medical Countermeasures Against
New and Emerging (DOMANE) threats program, the DARPA Accelerated Molecular Discovery
program and the Sanofi Computational Antibody Design grant.
DT and DB were supported with funds provided by a gift from Microsoft. DB was additionally supported by the Audacious Project at the Institute for Protein Design, the Open Philanthropy Project Improving Protein Design Fund, an Alfred P. Sloan Foundation Matter-to-Life Program Grant (G-2021-16899) and the Howard Hughes Medical Institute.

\bibliography{references.bib}
\bibliographystyle{iclr2023_conference}

%\section*{Checklist}
%\input{checklist}

\newpage

\appendix
\addcontentsline{toc}{section}{Appendix} % Add the appendix text to the document TOC
\part{Appendix} % Start the appendix part
\parttoc % Insert the appendix TOC

% \section{Background}\label{sec:background}
% \input{supplement/background}

% \section{Additional Related work}\label{sec:supp_related_work}
% \input{supplement/more_related_work}
% \tableofcontents

\section{Additional related work}\label{sec:additional_related}
We next cover additional related work on generative models of proteins sequence and structure, beyond the discussion in \cref{sec:related}.
Following the success of deep language models, 
\citet{ferruz2022deep} developed protein sequence models to generate new proteins, but these models do not allow specification of structural motifs.
Another class of methods, referred to as fixed backbone sequence design \citep{fleishman2011rosettascripts,ingraham2019generative,xiong2020increasing,mcpartlon2022deep,hsu2022learning}, attempts to solve the problem of identifying a sequence that folds into any given designable backbone structure.
In the present work, we utilize a particular sequence design method, \protmpnn{}  \citep{dauparas2022protmpnn}, but in principle any other fixed-backbone sequence design method could be used in its place.
\citet{anand2018generative, lin2021deep, wu2021ebmfold} propose generative adversarial networks, variational autoencoders, and energy-based models, respectively, on distance matrices, but these approaches 
(1) do not generate backbones compatible with a specified motif
and (2) rely on an unwieldy optimization step to translate the distance matrix into backbone coordinates.
Other authors use neural net \citep{tischer2020design,anishchenko2021novo,wang2021deep,huang2022backbone, wu2021ebmfold},
but require a computationally challenging conformational landscape exploration.

\section{Problem assumptions and modeling heuristics}\label{sec:supp_caveats}
The formulation of the motif-scaffolding problem presented in \Cref{sec:motif_scaffolding_formulation} makes several simplifying assumptions,
and our modeling approach relies on several heuristics.
We describe these assumptions and heuristics in what follows,
and comment on how they might be addressed by further methodological developments.
But we first describe an illustrative example of an instance of motif scaffolding.

 \revision{
 \textbf{Protein sequence-structure relationship.}
 Generally speaking, a protein's sequence encodes an ensemble of conformations, populated to different degrees at biological temperatures. Anfinsen's hypothesis states that the ground state conformation is thermodynamically accessible \citep{anfinsen1973principles}, providing a mapping from sequence to a unique (ground state) structure. In practice, the ground state structures make up the vast majority of experimentally determined protein conformations, as over 95\% of structures in the Protein Data Bank (PDB) are collected at cryogenic temperatures \citep{fraser2011accessing}. Thus we simplify our problem by saying that a sequence uniquely maps to a static structure (i.e.\ the ground state structure).
However, violations of this assumption arise in some PDB structures as a result of
(1) of context specific determinants of structure such as post-translational modifications and environmental factors including pH, binding partners, and salts, as well as 
(2) thermodynamic inaccessibility of the ground state.
% However, it is important to note ways in which the data in the PDB may not always reflect this assumption:
% \begin{itemize}
%     \item  Post-translational modifications may alter protein structure and function, which biology often takes advantage of to regulate a process. However, if we view that modification as a change to the primary sequence, our sequence-structure relationship still holds.
%     \item The ground state conformation is also affected by the environment. There may be different structures of the same protein if they are taken with different pHs, binding partners or salts.
%     \item Not all ground state structures are thermodynamically accessible. A protein may need the assistance of cellular chaperons to adopt a functional conformation. This help is not captured by the crystal structure.
% \end{itemize}
}
 
\textbf{Motif sequence and side-chains.}
As stated in \Cref{sec:motif_scaffolding_formulation}, we assume we may represent a functional motif by the coordinates of its C-$\alpha$ atoms.
However, the biochemical functions of proteins depend not only on backbone structure, but also on side-chains.
For example, the activity of many enzymes is imparted by triplets of residues, known as \emph{catalytic triads},
whose ability to catalyze reactions depends on the spatial organization of side-chain atoms.
Our problem statement and subsequent evaluation scheme are agnostic to the amino acid identity of motif residues, let alone side-chain positioning.
A more complete representation of a motif would include the side-chain identities (i.e.\ the amino acid \emph{sequence}) and side-chain atom coordinates.

\textbf{Scaffold length and motif placement.}
We have additionally assumed that the size of scaffolds and
the indices of motif residues within the backbone chain, $\M,$ are known a priori.
However, in practice satisfactory scaffolds could have different lengths and different motif placements,
and typically it is not known a priori what lengths and placements will be best.
Previous works have addressed this challenge through brute force by sampling multiple lengths and placements,
and relied on post-hoc filtering to identify the most promising scaffolds \citep{wang2021deep, yang2021bottom}.
Subsequent work on ML methods could potentially generalize beyond this assumption to efficiently sample appropriate scaffold lengths and motif placements.

\textbf{Sequence and side-chain modeling.}
\modelname{} models only the backbone coordinates
and leaves sequence design to a subsequent stage, for which we have used \protmpnn{}. 
A more complete representation of a proteins could include both sequence and structure (where structure can be divided into the backbone and side-chain atom coordinates).
To model sequence, we rely on a separately trained neural network, \protmpnn{}, but this is not ideal.
% Using a separately trained neural network is not ideal since \modelname{} and \protmpnn{} are trained independently on different training data.
Unless \modelname{} produces perfect backbones, one would expect the backbone samples of \modelname{} to present a substantial domain shift when used as input for \protmpnn{}.
% As discussed in \cref{sec:discussion}, a limitation of \modelname{} is that it presents occasional chain breaks.
% Generalizing \modelname{} to generate sequences jointly with backbone structure would enable end-to-end training and might allow more reliable sequences by eliminating this domain shift and provide greater expressivity that could mitigate these limitations. 

\textbf{3D backbone representation.}
% The minimal coordinate choice of a protein structure is the C$\alpha$ coordinates of every residue along the backbone.
In this work, we represent a protein structure using the C-$\alpha$ coordinates of every residue along the backbone.
However, this representation is coarse-grained and ignores additional backbone atomic coordinates, namely the backbone carbon and nitrogen atoms. 
% If the amino acid type of each residue was modeled, one could also model the sidechain atomic coordinates or just the C$\beta$ coordinates (which is shared among all but one amino acid).
% Our implementation of \protmpnn{} in \cref{fig:schematic} models only C$\alpha$ coordinates, but 
\citet{dauparas2022protmpnn} observed additionally modeling the heavy atoms of the backbone nitrogen and carbon atoms along with the C-$\beta$ of every residue (to capture side-chain information) improved performance (by sequence recovery) for fixed-backbone sequence design. 
We hypothesize modeling additional coordinates of every residue would also improve designability performance of \modelname{}. Constraining \modelname{} to place the remaining atoms in the correct orientation could help enforce correct chirality and mitigate chain breaks.

\section{Additional \modelname{} details}\label{sec:supp_protdiff}

As a reminder from \cref{sec:method}, each node in the graph is indexed by $n=1,\dots, N$ and corresponds to a residue with coordinates $\X_n \in \mathbb{R}^3$ and node features $h_n \in \mathbb{R}^D$. For each pair of nodes $n, n'$ we define an edge and associate it with edge features $a_{nn'} \in \mathbb{R}^D$. Our neural network to predict $\epsilon_\theta$ is an instance of EGNN composed of multiple EGCL layers . We recount details of EGCL and then discuss construction of edge and node features, $a_{nn'}$ and $h_n$.

\textbf{Equivariant graph convolution layers (EGCL).}
Each layer $l=1, \dots, L$ defines an update as $(\X^{l}, h^{l})=\text{EGCL}[\X^{l-1}, h^{l-1}]$ where for each node $n$
\begin{gather*}
    \label{eq:egnn}
    \X_n^{l} = \X_n^{l-1} + \sum_{{n^\prime} \ne n} \vec \omega_{nn^\prime}\cdot\phi_\X(h_n^{l-1}, h_{n^\prime}^{l-1}, d_{n{n^\prime}}, a_{n{n^\prime}}) \quad \text{and}\quad
    h_n^{l} = \phi_h(h_n^{l-1}, m_n), \quad \text{for}\\
    \vec \omega_{nn^\prime} = \frac{\X_n^{l-1} - \X_{n^\prime}^{l-1}}{\sqrt{d_{nn^\prime}} + \gamma}, \quad 
    m_n = \sum_{n^\prime \neq n}\phi_e(h_n^{l-1}, h_{n^\prime}^{l-1}, d_{n{n^\prime}}, a_{nn^\prime}),\text{ and } d_{nn^\prime} = \|\X_n^{l-1} - \X_{n^\prime}^{l-1}\|_2^2.
\end{gather*}
$\phi_e, \phi_h$, and  $\phi_\X$ are fully connected neural networks,
and $\gamma$ is a small positive constant included for numerical stability.
The first EGCL layer takes in initial node embeddings, $h^0$ while edge embeddings, $a_{nn'}$, are kept fixed throughout. 

\textbf{Initial node and edge embeddings.} Each edge between two residues indexed in the sequence by $(n, n')$ is featurized with $D$ features obtained through a sinusoidal encoding of its relative offset:
\begin{align*}
    a_{nn^\prime} = \begin{bmatrix}\varphi(n-n^\prime, 1)\\
    \vdots \\
    \varphi(n-n^\prime, D)
    \end{bmatrix}, \text{ where }\
    \varphi(x, k) = \begin{cases}
    \text{sin}\left(x\cdot \pi/ N^{2\cdot k/D}\right) ,& k\mod 2 = 0 \\
    \text{cos}\left(x\cdot \pi/ N^{2\cdot (k-1)/D}\right) ,& k\mod 2 = 1.
    \end{cases}
    % \label{eq:pos_feats}
\end{align*}
For node features, we similarly use a sinusoidal encoding of sequence position as well as of the diffusion time step $t$ as
\begin{align*}
    h_n(t) = \begin{bmatrix}\varphi(n, 1)\\
    \vdots \\
    \varphi(n, D)
    \end{bmatrix} + R \begin{bmatrix}\varphi(t, 1)\\
    \vdots \\
    \varphi(t, D)
    \end{bmatrix}, \
\end{align*}
where $R$ is a $D \times D$ orthogonal matrix chosen uniformly at random.
Intuitively, applying $R$ transforms the time encoding to be orthogonal to the positional encoding.

\textbf{Coordinate scaling}
\revision{While protein structures are typically parameterized in Angstroms, we transform the input protein coordinates to be in nanometers rather by dividing by 10.
This scaling brings the backbones to a spatial scale similar to the reference distribution at which the forward noising process is stationary, a unit variance isotropic Gaussian.
Importantly, the distribution of the final step $T$ is indistinguishable from an isotropic Gaussian (Supplementary \cref{fig:x_t_distribution}.)}

\begin{figure}[H]
    \centering
    \includegraphics[width=0.9\textwidth]{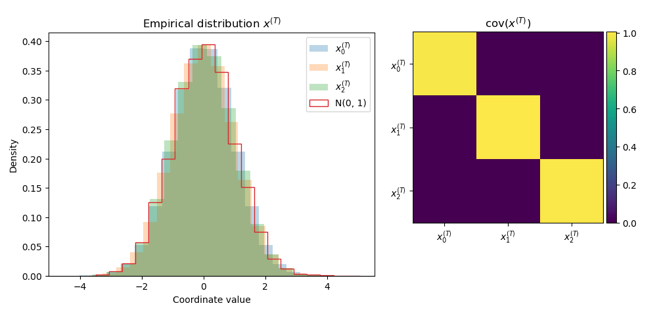}
    \caption{Distribution of $x^{(T)}$ after centering and scaling $x^{(0)}$ to nanometers.}
    \label{fig:x_t_distribution}
\end{figure}

\section{Conditional sampling: \samplingname{} details and supplementary proofs}\label{sec:smc_appendix}
We here provide additional details related to \samplingname{} and the replacement method described in \Cref{sec:conditional_sampling}.
Details of the replacement method \citep{song2020score} and our analysis of its error are in \Cref{sec:replacement_error}.
\Cref{sec:particle_details} provides details of our sampling method, \samplingname{}, including
(1) a proof of \Cref{prop:asymptotic_accuracy} 
and (2) details of the residual resampling step.
We leave technical proofs and lemmas to \Cref{sec:proofs_and_lemmas}.

%\textbf{Notation:}
%In the remainder of this section we treat $\X$ as a random variable and let $\Xval$ be the value it takes.
%We treat $q$ and $p_\theta$ as probability measures associated with the given forward and learned reverse processes, respectively.
%%$p_\theta$ is assumed to be of the form in described in \Cref{sec:preliminaries}.
%Our goal is to approximate the conditional distribution of $\X_\Sc$ given any fixed scaffold $\Xval_\M.$
%We formalize this object of interest as the (deterministic) measure $q_{\X_\Sc}(\cdot \mid \X_\M=\Xval_\M),$ 
%defined such that for all Borel measurable $A,
%q_{\X_\Sc}(A \mid \X_\M=\Xval_\M) = q(\X_\Sc\in A \mid \X_\M=\Xval_\M).$
%We analogously write $q_{\X_\Sc}(\cdot \mid \X_\M)$ and $p_{\theta, \X_\Sc}(\cdot \mid \X_\M)$ to be the random probability measures
%such that $q_{\X_\Sc}(A \mid \X_\M) = q(\X_\Sc\in A \mid \X_\M)$ and  $p_{\theta,\X_\Sc}(A \mid \X_\M)= p_\theta(\X_\Sc\in A \mid \X_\M).$
%We write other conditionals similarly.
%For notational simplicity, we overload this notation and write, for example,
%$p_{\theta, \dt{\X}{t}}(\Xval \mid \dt{\X}{t+1})$ to be the density of $\dt{\X}{t}$ evaluated at $\Xval.$
%Additionally, we introduce random variables as $Z\sim q_{\X}(\cdot \mid Y)$ if $q(Z\in A\mid Y)= q_{\X}(A\mid Y).$

\textbf{Notation.}
In the following, we require notation that is more precise than in previous sections.
For each $t=0, \dots, T,$ we let $\qall{t}{\cdot}$ and  $\pall{t}{\cdot}$ denote the density functions of $\dt{\X}{t}$ according to the forward process and to our neural network approximation of the reverse process, respectively.
We denote densities restricted to the motif and scaffold with subscripts $\M$ and $\Sc.$
For example, we here write $\pmotif{t}{\dt{\X_\M}{t}}$, whereas we wrote $\ptheta(\dt{\X_\M}{t})$ in the main text.
We write (random) conditional densities as $\qmotif{t}{\cdot \mid \dt{\X_\M}{t-1}}$ and write 
the (deterministic) conditional density for an observation $\dt{\X_\M}{t-1}=\Xval_\M$ as
$\qmotif{t}{\cdot \mid \dt{\X_\M}{t-1}=\Xval_\M}.$

An object of interest will be the Kullback-Leibler (KL) divergence.
We write $\KL{\qall{t}{\cdot}
}{\pall{t}{\cdot}
} := \int \qall{t}{\Xval} \log \frac{\qall{t}{\Xval}}{\pall{t}{\Xval}}d \Xval,$
where $\log(\cdot)$ is the natural (base $e$) logarithm.
We will also encounter the expected KL between conditional densities,  
which we will write as 
$\EKL{\qall{t}{\cdot\mid \dt{\X}{t-1} }
}{\pall{t}{\cdot\mid \dt{\X}{t-1} }
} := \int \qall{t-1}{\Xval} \KL{\qall{t}{\cdot\mid \dt{\X}{t-1}=\Xval}}{\pall{t}{\cdot \mid \dt{\X}{t-1}=\Xval}}d\Xval,$
where the outer expectation is taken with respect to the unconditional density associated with first argument of $\EKL{\cdot}{\cdot}.$

\subsection{The replacement method and its error}\label{sec:replacement_error}
The replacement method was proposed by \citet{song2020score} for the task of inpainting in the context of score-based generative models.
Work \citep{ho2022video} concurrent with the present paper applied the replacement method to DPMs.
Although \citet{song2020score} notes that this approach can be understood as \emph{approximate} conditional sampling,
they provide no discussion of approximation error.
We here show that the replacement method introduces irreducible error that is inherent to the forward process.
\Cref{alg:replacement} provides an explicit description of the replacement method.

\begin{figure}[!t]
\makebox[\textwidth][c]{%
\begin{minipage}{0.46\textwidth}
\begin{algorithm}[H]
\centering
\caption{
Replacement method for approximate conditional sampling
}
\label{alg:replacement}
\begin{algorithmic}[1]
    \State{\textbf{Input:} $\dt{\X_\M}{0}$ (motif)}
    \State{\text{// Forward diffuse motif}}
    \State $\dt{\breve \X_\M}{1:T} \sim q(\dt{\X_\M}{1:T} \mid \dt{\X_\M}{0})$\\
    
    \State \text{// Reverse diffuse scaffold} 
    \State $\dt{\X}{T} \sim \ptheta(\dt{\X}{T})$
    \For{$t = T,\dots, 1$}
        \State{\text{// \emph{Replace} with forward diffused motif}}
        \State $\dt{\X}{t} \leftarrow [\dt{\breve \X_{\M}}{t}, \dt{\X_{\Sc}}{t}]$\\
        \State {\text{// Propose next step}}
        \State {$\dt{\X}{t-1} \sim \ptheta(\dt{\X}{t-1} \mid \dt{\X}{t})$}
    \EndFor 
    \State \text{Return } $\dt{\X_\Sc}{0},\ \dt{\X}{1:T}$
\end{algorithmic}
\end{algorithm}
\end{minipage}
}
\hfill
\end{figure}

The first return of \Cref{alg:replacement}, $\dt{\X_\Sc}{0},$
is used as a putative inpainting solution or approximate conditional sample.
But \Cref{alg:replacement} additionally returns subsequent time steps, 
$\dt{\X}{1:T}.$
We denote the approximation over all steps implied by the generative procedure in \Cref{alg:replacement}
by $\preplace{1:T}{\cdot\mid \dt{\X_\M}{0}=\Xval_\M}$ and compare it to the exact conditional,
$\qall{1:T}{\cdot\mid \dt{\X_\M}{0}=\Xval_\M}$.
We here consider error in KL divergence because it permits an analytically tractable and transparent analysis.
We additionally consider the idealized scenario where $\pall{0:T}{\cdot}$ perfectly captures the reverse process.
Under this condition, the forward KL takes a surprisingly simple form.

\begin{prop}
\label{prop:KL_div}
Suppose that $\pall{0:T}{\cdot}$ exactly matches the forward diffusion process 
such that for every $\Xval, \, \pall{t}{\cdot \mid  \dt{\X}{t+1}=\Xval} = \qall{t}{\cdot \mid \dt{\X}{t+1}=\Xval}$. 
Then for any motif $\Xval_\M,$
\begin{align}
\begin{split}
\label{eqn:KL_Song}
&\KL{
\qall{1:T}{\cdot\mid \dt{\X_\M}{0}=\Xval_\M}
}{
\preplace{1:T}{\cdot\mid \dt{\X_\M}{0}=\Xval_\M}
}\\
&=
\sum_{t=1}^{T-1} \EKL{
    \qscaf{t}{\cdot\mid \dt{\X}{t+1}, \dt{\X_\M}{0}=\Xval_\M}
    }{
    \qscaf{t}{\cdot\mid \dt{\X}{t+1}}}.
\end{split}
\end{align}
\end{prop}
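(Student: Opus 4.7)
The plan is to exploit the fact that \Cref{alg:replacement} draws $\dt{\X_\M}{1:T}$ directly from $\qmotif{1:T}{\cdot \mid \dt{\X_\M}{0}=\Xval_\M}$, so both the true conditional $\qall{1:T}{\cdot \mid \dt{\X_\M}{0}=\Xval_\M}$ and the replacement distribution $\preplace{1:T}{\cdot \mid \dt{\X_\M}{0}=\Xval_\M}$ induce the identical marginal $\qmotif{1:T}{\cdot \mid \Xval_\M}$ on $\dt{\X_\M}{1:T}$. Applying the chain rule of KL collapses the left-hand side to $\EKL{\qscaf{1:T}{\cdot \mid \dt{\X_\M}{0:T}}}{\preplacescaf{1:T}{\cdot \mid \dt{\X_\M}{0:T}}}$, with the outer expectation taken under $\qmotif{1:T}{\cdot \mid \Xval_\M}$. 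This reduces the computation to a KL over scaffold trajectories with the full motif trajectory $\dt{\X_\M}{0:T}$ fixed.

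The key structural observation I would leverage is that because the forward transitions add isotropic Gaussian noise coordinate-wise, conditional on $\dt{\X}{0}$ the motif and scaffold trajectories evolve through independent noise streams. Integrating out $\dt{\X_\Sc}{0}$ yields the conditional independence $\dt{\X_\M}{1:T} \indep \dt{\X_\Sc}{1:T} \mid \dt{\X_\M}{0}$, from which $\qscaf{1:T}{\cdot \mid \dt{\X_\M}{0:T}} = \qscaf{1:T}{\cdot \mid \dt{\X_\M}{0}}$. The scaffold chain conditional on $\dt{\X_\M}{0}$ remains Markov, so it admits the reverse-time factorization $\qscaf{T}{\cdot \mid \dt{\X_\M}{0}} \prod_{t=1}^{T-1} \qscaf{t}{\cdot \mid \dt{\X}{t+1}, \dt{\X_\M}{0}}$, where the same conditional independence lets me promote each $\dt{\X_\Sc}{t+1}$ to the full $\dt{\X}{t+1}$ in the conditioning. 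Unrolling \Cref{alg:replacement} and using $\ptheta = q$ on reverse transitions gives the companion factorization $\preplacescaf{1:T}{\cdot \mid \dt{\X_\M}{0:T}} = \qscaf{T}{\cdot} \prod_{t=1}^{T-1} \qscaf{t}{\cdot \mid \dt{\X}{t+1}}$, where the terminal factor is the \emph{unconditional} scaffold marginal because $\dt{\X_\Sc}{T}$ is drawn from the prior independently of the motif.

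A second application of the chain rule of KL, matching the two reverse-time factorizations term by term, produces the target sum $\sum_{t=1}^{T-1} \EKL{\qscaf{t}{\cdot \mid \dt{\X}{t+1}, \dt{\X_\M}{0} = \Xval_\M}}{\qscaf{t}{\cdot \mid \dt{\X}{t+1}}}$ together with a terminal discrepancy $\KL{\qscaf{T}{\cdot \mid \Xval_\M}}{\qscaf{T}{\cdot}}$; under the standard stationarity idealization invoked in \Cref{sec:dpm} that $\dt{\X}{T}$ is isotropic Gaussian under $q$ and hence independent of $\dt{\X_\M}{0}$, this terminal term vanishes and the stated equality follows. The main obstacle is establishing the conditional-independence claim, which is not a generic Markov property but a consequence of the diagonal-Gaussian structure of the forward noise; once that is in hand, the rest is careful bookkeeping to align the two factorizations so that chain-rule KL produces exactly the indexed sum written in the proposition.
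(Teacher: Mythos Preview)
Your proposal is correct and follows essentially the same route as the paper's proof: apply the chain rule to peel off the motif trajectory (which cancels because both distributions draw $\dt{\X_\M}{1:T}$ from $q$), then factorize the scaffold conditional in reverse time, use $p_\theta=q$ on the transitions, and drop the terminal term by the stationarity idealization $\dt{\X}{T}\sim\mathcal{N}(0,I)$. The only presentational difference is that you first invoke the coordinate-wise noise structure to establish $\dt{\X_\M}{1:T}\indep\dt{\X_\Sc}{1:T}\mid\dt{\X_\M}{0}$ and then factorize, whereas the paper factorizes $\qscaf{1:T}{\cdot\mid\dt{\X_\M}{0:T}}$ directly and simplifies the conditioning in each factor from $\dt{\X_\M}{0:T}$ to $(\dt{\X}{t+1},\dt{\X_\M}{0})$ in one step; both rely on the same diagonal-Gaussian conditional independence you correctly flag as the non-generic ingredient.
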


\Cref{prop:KL_div} reveals that the replacement method introduces approximation error that is intrinsic to the forward process and cannot be eliminated by making $\pall{0:T}{\cdot}$ more expressive.
Although the individual terms in the right hand side of \Cref{eqn:KL_Song} are not analytically tractable in general,
in the following corollary we show that this approximation error can be non-trivial by considering a special case.
For this following example, we depart from the earlier assumption that $\X$ is in 3D, and consider scalar valued $\X_M$ and $\X_\Sc.$

\begin{corollary}\label{corollary:diverging_error}
Suppose $[\dt{\X_\M}{0},\dt{\X_\Sc}{0}]$ is bivariate normal distributed with mean zero, unit variance, and covariance $\rho.$
Further suppose that 
$\qscaf{t}{\cdot  \mid \dt{\X_\Sc}{0}}=\mathcal{N}(\cdot ; \sqrt{\dt{\bar \alpha}{t}} \dt{\X_\Sc}{0}, 1-\dt{\bar \alpha}{t})$
and 
$\qscaf{t+1}{\cdot \mid \dt{\X_\Sc}{t}} = \mathcal{N}(\cdot ; \sqrt{1-\dt{\beta}{t+1}} \dt{\X_\Sc}{t}, \dt{\beta}{t+1})$
as in \Cref{sec:preliminaries}, where $\dt{\beta}{t+1}$ and $\dt{\bar \alpha}{t}$ are between 0 and 1.
Then 
$$
\EKL{
\qscaf{t}{\cdot \mid \dt{\X_\Sc}{t+1}, \dt{\X_\M}{0}}
}{
\qscaf{t}{\cdot \mid \dt{\X_\Sc}{t+1}}
}
\ge -\frac{1}{2}\left(
\log(1-\dt{\beta}{t+1} \dt{\bar \alpha}{t}\rho^2 ) + \dt{\beta}{t+1} \dt{\bar \alpha}{t} \rho^2 
\right).
$$
\end{corollary}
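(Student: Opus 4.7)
The plan is to exploit joint Gaussianity. Because the forward kernels are linear Gaussian and $(\dt{\X_\M}{0}, \dt{\X_\Sc}{0})$ is bivariate Gaussian, the full joint of $(\dt{\X_\M}{0}, \dt{\X_\Sc}{0:T})$ is Gaussian; in particular both conditional densities appearing in the expected KL are univariate Gaussians. Let $\mu_1, \sigma_1^2$ denote the mean and variance of $\qscaf{t}{\cdot\mid \dt{\X_\Sc}{t+1}, \dt{\X_\M}{0}}$ and $\mu_2, \sigma_2^2$ those of $\qscaf{t}{\cdot\mid \dt{\X_\Sc}{t+1}}$; the means are linear functions of the conditioning variables while the variances are deterministic.

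First I would plug the closed-form KL between two Gaussians into the outer expectation. The key simplification is that $\E[(\mu_1-\mu_2)^2] = \sigma_2^2 - \sigma_1^2$: by the tower property $\mu_2 = \E[\mu_1 \mid \dt{\X_\Sc}{t+1}]$, and applying the law of total variance to $\Var(\dt{\X_\Sc}{t}\mid \dt{\X_\Sc}{t+1})$ with further conditioning on $\dt{\X_\M}{0}$ yields the claim. The cross and variance terms then collapse to
\[
\EKL{\qscaf{t}{\cdot\mid \dt{\X_\Sc}{t+1}, \dt{\X_\M}{0}}}{\qscaf{t}{\cdot\mid \dt{\X_\Sc}{t+1}}} = \tfrac{1}{2}\log(\sigma_2^2/\sigma_1^2).
\]

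Next I would compute the two variances explicitly. Writing $a^2 = \dt{\bar\alpha}{t}$ and $b^2 = 1 - \dt{\beta}{t+1}$, the forward transition gives $\sigma_2^2 = 1 - b^2 = \dt{\beta}{t+1}$ directly. For $\sigma_1^2$ I would use the Markov structure: $\dt{\X_\M}{0}$ and $\dt{\X_\Sc}{t+1}$ are conditionally independent given $\dt{\X_\Sc}{t}$, so multiplying the two known Gaussian factors $p(\dt{\X_\Sc}{t}\mid \dt{\X_\M}{0}) = \mathcal{N}(a\rho\dt{\X_\M}{0}, 1-a^2\rho^2)$ and $p(\dt{\X_\Sc}{t+1}\mid \dt{\X_\Sc}{t}) = \mathcal{N}(b\dt{\X_\Sc}{t}, 1-b^2)$ and reading off the precision yields $\sigma_1^2 = (1-a^2\rho^2)(1-b^2)/(1-a^2 b^2\rho^2)$. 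Substituting back gives the exact value $\tfrac{1}{2}\log\frac{1-a^2 b^2 \rho^2}{1-a^2\rho^2}$.

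Finally I would verify the stated lower bound algebraically. Setting $z = a^2\rho^2$ and $y = (1-b^2)z = \dt{\beta}{t+1}\dt{\bar\alpha}{t}\rho^2$ (so $0 \le y \le z < 1$), the inequality reduces to $\log\bigl((1 + y/(1-z))(1-y)\bigr) + y \ge 0$. Direct expansion gives $(1 + y/(1-z))(1-y) = 1 + y(z-y)/(1-z) \ge 1$, so the logarithm is nonnegative and adding $y \ge 0$ finishes the argument. The main obstacle is the variance computation for $\sigma_1^2$: one must exploit the Markov structure of the chain to avoid inverting an unwieldy joint covariance matrix, after which the remaining algebra and the final elementary inequality are routine.
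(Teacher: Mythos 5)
Your proof is correct, and it takes a genuinely different route from the paper's. The paper's argument is a two-sided relaxation: it discards the mean-discrepancy term $(\mu_1-\mu_2)^2$ from the Gaussian KL formula to get a lower bound depending only on the two conditional variances, proves in a separate lemma that $\Var[\dt{\X_\Sc}{t}\mid \dt{\X_\Sc}{t+1},\dt{\X_\M}{0}]\le \dt{\beta}{t+1}(1-\dt{\beta}{t+1}\dt{\bar\alpha}{t}\rho^2)$ by an inequality chain on the conditional covariance matrix, and then invokes monotonicity of the variance-only KL bound to land exactly on the stated right-hand side. You instead keep everything exact: the identity $\E[(\mu_1-\mu_2)^2]=\sigma_2^2-\sigma_1^2$ (tower property plus law of total variance, valid here because the conditional variances are deterministic) collapses the expected KL to $\tfrac12\log(\sigma_2^2/\sigma_1^2)$, and your Markov-factorization computation of $\sigma_1^2=(1-\dt{\bar\alpha}{t}\rho^2)\dt{\beta}{t+1}/(1-\dt{\bar\alpha}{t}(1-\dt{\beta}{t+1})\rho^2)$ agrees with the exact conditional variance the paper computes before bounding it. What your approach buys is a closed-form value for the expected KL, $\tfrac12\log\frac{1-\dt{\bar\alpha}{t}(1-\dt{\beta}{t+1})\rho^2}{1-\dt{\bar\alpha}{t}\rho^2}$, which is first-order in $\dt{\beta}{t+1}\dt{\bar\alpha}{t}\rho^2$ and hence strictly sharper than the corollary's second-order bound; the cost is that the stated bound must then be recovered by the final elementary inequality, which you verify correctly (note $z=\dt{\bar\alpha}{t}\rho^2<1$ is needed for the expansion, and it holds under the stated hypotheses). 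The paper's route is more modular but sacrifices the mean term entirely.
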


We note two takeaways of \Cref{corollary:diverging_error}.
First, as we might intuitively expect, this error can be large when significant correlation in the target distribution is present.
Second, we see that the approximation error can be larger at earlier time steps, when $\dt{\bar \alpha}{t}$ is closer to 1.

\subsection{\samplingname{} details and verification proof of \Cref{prop:asymptotic_accuracy}}\label{sec:particle_details}
The idea behind the \samplingname{} procedure in \Cref{alg:particle} is to break sampling of
$\dt{\X_\Sc}{0}\sim \qscaf{0}{\cdot\mid \dt{\X_\M}{0}}$
into three stages:
\begin{enumerate}
\item {Draw
$\dt{\X_\M}{1:T} \sim \qmotif{1:T}{\cdot \mid \dt{\X_\M}{0}}.$}
\item {Draw $\dt{\X_\Sc}{1:T} \sim \qscaf{1:T}{\cdot \mid \dt{\X_\M}{0:T}}.$}
\item{Draw
$\dt{\X_\Sc}{0} \sim \qscaf{0}{\cdot \mid \dt{\X_\M}{0:T}, \dt{\X_\Sc}{1:T}}$}
%$\dt{\X_\Sc}{0} \sim \qscaf{0}{\cdot \mid \dt{\X_\M}{0:T}}=\int \qscaf{0}{\cdot \mid \dt{\X_\M}{1}, \dt{\X_\Sc}{1}=\Xval} Q_1(\Xval)d\Xval.$}
\end{enumerate}
If all three steps were performed exactly, by the law of total probability $\dt{\X_\Sc}{0}$ in step (3) would (marginally) be an exact sample from $\qscaf{0}{\cdot \mid \dt{\X_\M}{0}}.$
As such, \samplingname{} aims to perform step (1) and approximate steps (2) and (3).
Step (1) corresponds to forward diffusing the motif in lines 2--3 and is exact because we diffuse according to $q$.

Step (3) corresponds to line 17 in the last iteration (when $t=1$).
Specifically, to sample from $\qscaf{0}{\cdot \mid \dt{\X_\M}{0:T}, \dt{\X_\Sc}{1:T}}$ we make three observations. 
(i) The Markov structure of the forward process implies that $\qscaf{0}{\cdot \mid \dt{\X_\M}{0:T}, \dt{\X_\Sc}{1:T}}=\qscaf{0}{\cdot \mid \dt{\X_\M}{0:1}, \dt{\X_\Sc}{1}}$.
(ii) By the assumption that the forward and approximated reverse process agree, we have 
$\qscaf{0}{\cdot \mid \dt{\X_\M}{0:1}, \dt{\X_\Sc}{1}}=\pscaf{0}{\cdot \mid \dt{\X_\M}{0:1}, \dt{\X_\Sc}{1}}$.
(iii) Finally, because $\pall{t}{\cdot \mid \dt{\X}{t+1}}$ factorizes across $\M$ and $\Sc$ for each $t$, $\pscaf{0}{\cdot \mid \dt{\X_\M}{0:1}, \dt{\X_\Sc}{1}}=\pscaf{0}{\cdot \mid \dt{\X_\M}{1}, \dt{\X_\Sc}{1}}$.
As a result, under the assumptions of the proposition, we may sample from $\qscaf{0}{\cdot \mid \dt{\X_\M}{0:T}, \dt{\X_\Sc}{1:T}},$ and perform step (3) exactly as well.

Step (2) is the only non-trivial step, and cannot be performed exactly.
The challenge is that although the reverse process approximation, $\pscaf{1:T}{\cdot \mid \dt{\X_\M}{0:T}},$ is well-defined,
computing it explicitly involves an intractable, high-dimensional integral.

%In particular, for any $t$ and $\Xval_\Sc$ we may write 
%\(
%\qscaf{1}{\Xval_\Sc\mid \dt{\X_\M}{0:T}} &= 
%\frac{
%\qscaf{t}{\Xval_\Sc\mid \dt{\X}{t+1:T}} 
%\qmotif{t-1:t}{\dt{\X_\M}{t-1:t}\mid \dt{\X}{t+1:T}, \dt{\X_\Sc}{t}=\Xval_\Sc}
%}{
%\qmotif{t-1:t}{\dt{\X_\M}{t-1:t}\mid \dt{\X}{t+1:T}}   
%}\\
%&\propto \qscaf{t}{\Xval_\Sc\mid \dt{\X}{t+1}} 
%\qmotif{t-1}{\dt{\X_\M}{t-1}\mid \dt{\X_\M}{t}, \dt{\X_\Sc}{t}=\Xval_\Sc} \\
%&= \pscaf{t}{\Xval_\Sc\mid \dt{\X}{t+1}} 
%\pmotif{t-1}{\dt{\X_\M}{t-1}\mid \dt{\X_\M}{t}, \dt{\X_\Sc}{t}=\Xval_\Sc},
%\)
%where in the first line we apply Bayes' rule,
%in the second line we simplify using assumed conditional independences and drop multiplicative constants that do not depend on $\Xval_\Sc$,
%and in the third line we use the assumption that $\pall{t}{\cdot\mid \dt{\X}{t+1}}=\qall{t}{\cdot \mid \dt{\X}{t+1}}.$
%Therefore we can see that 
%$$\qscaf{t}{\Xval_\Sc\mid \dt{\X_\M}{t-1:T}, \dt{\X_\Sc}{t+1:T}} = 
%\pscaf{t}{\Xval_\Sc\mid \dt{\X}{t+1}} 
%\pmotif{t-1}{\dt{\X_\M}{t-1}\mid \dt{\X_\M}{t}, \dt{\X_\Sc}{t}=\Xval_\Sc}
%/Z,$$
%where $Z= \int 
%\pscaf{t}{\Xval_\Sc\mid \dt{\X}{t+1}} 
%\pmotif{t-1}{\dt{\X_\M}{t-1}\mid \dt{\X_\M}{t}, \dt{\X_\Sc}{t}=\Xval_\Sc} 
%d \Xval_\Sc,$
%is an intractable normalizing constant, and observe that exact sampling is not possible.
The sequential Monte Carlo approach of \samplingname{}, then, is to circumvent this intractability by constructing a sequence of approximations.
For each $t=T, T-1, \dots, 1,$ we approximate 
$\pscaf{t}{\cdot\mid \dt{\X_\M}{t-1:T}}$
(and thereby $\qscaf{t}{\cdot\mid \dt{\X}{t-1:T}})$
with $K$ weighted atoms (the \emph{particles}).
We denote these approximations (which are implicit in \Cref{alg:particle}) by
$\dt{\pr_K}{t}(\cdot):=\sum_{k=1}^K \dt{\tilde w_k}{t} \delta(\cdot ; \dt{\X_{\Sc,k}}{t})$,
where each $\dt{\tilde w_k}{t}$ and $\dt{\X_{\Sc,k}}{t}$ are as in \Cref{alg:particle},
and $\delta(\cdot; \X)$ denotes a Dirac mass at $\X.$
%We interpret each $\dt{\pr_K}{t}(\cdot)$ as an approximation to $\pscaf{t}{\cdot\mid \dt{\X_\M}{t-1:T}},$
%which therefore provides an approximation to each $\pscaf{t}{\cdot\mid \dt{\X_\}{t-1:T}}.$
In particular, $\dt{\pr_K}{1}(\cdot)$ is an approximation to $\pscaf{1}{\cdot\mid \dt{\X_\M}{0:T}}.$
Proving the proposition amounts to showing that in the limit as $K$ goes to infinity, each $\dt{\pr_K}{1}(\cdot)$ converges weakly to $\pscaf{1}{\cdot \mid \dt{\X_\M}{0:T}},$
which by assumption is equal to $\qscaf{1}{\cdot \mid \dt{\X_\M}{0:T}}.$
This weak convergence follows from standard asymptotics for particle filters \citep[Proposition 11.4]{chopin2020introduction},
which we make explicit in \Cref{lemma:doucet_proposition}.
As a result, if we perform step (3) with $\dt{\X_\Sc}{1}\sim \dt{\pr_K}{1}(\cdot),$
then this lemma implies that $\dt{\X_\Sc}{0}$ converges in distribution to $\qscaf{0}{\dt{\X_\Sc}{0}\mid \dt{\X_\M}{0}},$
since (i) $\qscaf{0}{\dt{\X_\Sc}{0}\mid \dt{\X_\M}{1},\dt{\X_\Sc}{1}}$
is continuous in $\dt{\X_\Sc}{1}$
and (ii) $\dt{\X_\Sc}{0}$ is independent of $\dt{\X_\M}{0}$ conditional on $\dt{\X}{1}.$

Recall that to show the proposition, it was to sufficient to show that $\dt{\pr_K}{1}$ converged weakly to 
$\qscaf{1}{\cdot \mid \dt{\X_\M}{0:T}};$
this implied that the $K$ particle returned by \Cref{alg:particle} would then converge in distribution to 
$\qscaf{0}{\cdot\mid \dt{\X_\M}{0:T}}$ which, by the law of total probability, implied that they marginally converge to $\qscaf{0}{\cdot\mid \dt{\X_\M}{0}}.$
However, while the particles return by \Cref{alg:particle} may be treated as exchangeable, they are not independent, because they depend on shared randomness in $\dt{\X_\M}{1:T}.$
To obtain approximate samples that are independent,
it is necessary to run \Cref{alg:particle} multiple times.

\textbf{Residual resampling.}
Line 14 of \Cref{alg:particle} indicates a \texttt{Resample} step.
In particle filtering, resampling steps (or \emph{branching mechanisms} \citep[Chapter 2]{doucet2001sequential}) filter out particles with very small weights,
and replace them with additional copies of particles with large weights.
Notably, the resampling step is the only point of departure of 
\Cref{alg:particle} from the replacement method;
without resampling, the algorithms behave identically.
While a variety of possible branching mechanisms exist, we use \emph{residual resampling} (\Cref{alg:residual_resample}) in our implementation for its simplicity.

\begin{figure}[!t]
\makebox[\textwidth][c]{%
\begin{minipage}{0.60\textwidth}
\begin{algorithm}[H]
%\centering
\caption{
Residual Resample
}\label{alg:residual_resample}
\begin{algorithmic}[1]
\State{\textbf{Input:} $w_{1:K}$ (weights), $\X_{1:K}$ (particles)}
 \State $\forall k, \ (c_k, r_k)\leftarrow (\lfloor K w_k \rfloor, K w_k- \lfloor K w_k \rfloor  )$
 \State $\tilde \X_C = [\underbrace{\X_1,\dots, \X_1}_{c_1}, \dots, \underbrace{\X_K,\dots, \X_K}_{c_K}]$
  \State $R\leftarrow K- \sum_{k=1}^K c_k$
 \State $[i_1, \dots, i_R] \sim \textrm{Multinomial}(r_{1:K},R)$
 \State $\tilde \X_R \leftarrow [\X_{i_1}, \dots, \X_{i_R}]$
 \State $\tilde \X = \textrm{concat}(\X_R, \X_C)$
\State \text{Return } $\tilde \X$
\end{algorithmic}
\end{algorithm}

\end{minipage}
}
\hfill
\end{figure}

\subsection{Proofs and lemmas}\label{sec:proofs_and_lemmas}
\textbf{Particle filtering lemma with technical conditions}

\begin{lemma}\label{lemma:doucet_proposition}
Consider $\dt{\pr_K}{1} := \sum_{k=1}^K \tilde w_k \delta(\cdot ; \dt{\X_{\Sc,k}}{1}),$
where $\tilde w_k$ and $\dt{\X_{\Sc,1:K}}{1}$ are as constructed in \Cref{alg:particle}.
Assume the conditions of \Cref{prop:asymptotic_accuracy}.
Then $\dt{\pr_K}{1}$ converges weakly to $\pscaf{1}{\cdot \mid \dt{\X_\M}{0:T}}$ as $K$ goes to infinity.
That is, for any Borel measurable $A,$
$\lim_{K\rightarrow \infty} \dt{\pr_K}{1}(A)= \int_A \pscaf{1}{x \mid \dt{\X_\M}{0:T}}dx$.
\end{lemma}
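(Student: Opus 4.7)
The plan is to recognize \samplingname{} as a bootstrap particle filter for a Feynman--Kac model whose final filtering distribution is $\pscaf{1}{\cdot \mid \dt{\X_\M}{0:T}}$, and then to invoke \citet[Proposition 11.4]{chopin2020introduction} directly. Because the proposition's hypothesis equates $\ptheta(\cdot \mid \dt{\X}{t+1})$ with $\qall{t}{\cdot \mid \dt{\X}{t+1}}$ at every transition, I can work exclusively with $\ptheta$ and need not distinguish the forward and reverse kernels in the algebra.

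\textbf{Identifying the Feynman--Kac structure.} The main derivation is the one-step backward recursion
\begin{equation*}
\pscaf{t}{\dt{\X_\Sc}{t} \mid \dt{\X_\M}{t-1:T}} \propto \ptheta(\dt{\X_\M}{t-1} \mid \dt{\X}{t}) \int \pscaf{t+1}{\dt{\X_\Sc}{t+1} \mid \dt{\X_\M}{t:T}} \, \ptheta(\dt{\X_\Sc}{t} \mid \dt{\X}{t+1}) \, d\dt{\X_\Sc}{t+1},
\end{equation*}
obtained from Bayes' rule after using two conditional-independence facts: (i) the parameterization of $\ptheta(\dt{\X}{t-1} \mid \dt{\X}{t})$ as a Gaussian with diagonal covariance gives the factorization $\ptheta(\dt{\X_\M}{t-1}, \dt{\X_\Sc}{t-1} \mid \dt{\X}{t}) = \ptheta(\dt{\X_\M}{t-1} \mid \dt{\X}{t}) \, \ptheta(\dt{\X_\Sc}{t-1} \mid \dt{\X}{t})$; and (ii) the Markov property of the reverse process yields $\ptheta(\dt{\X_\M}{t-1} \mid \dt{\X}{t}, \dt{\X_\M}{t:T}) = \ptheta(\dt{\X_\M}{t-1} \mid \dt{\X}{t})$. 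This recursion exhibits a mutation kernel $\ptheta(\dt{\X_\Sc}{t} \mid \dt{\X}{t+1})$ and a potential $G_t(\dt{\X}{t}) := \ptheta(\dt{\X_\M}{t-1} \mid \dt{\X}{t})$, which together specify the Feynman--Kac model.

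\textbf{Matching the algorithm and applying the theorem.} I then check line by line that \Cref{alg:particle} implements the standard bootstrap particle filter for this Feynman--Kac model: initialization from $\ptheta(\dt{\X}{T})$, mutation via $\ptheta(\cdot \mid \dt{\tilde \X_k}{t})$ concatenated with the (exactly) forward-diffused motif, weighting by $G_t$, and residual resampling. Under this identification, $\dt{\pr_K}{t}$ is precisely the weighted empirical measure analyzed in \citet[Proposition 11.4]{chopin2020introduction}, whose conclusion is exactly the desired weak convergence $\dt{\pr_K}{1} \Rightarrow \pscaf{1}{\cdot \mid \dt{\X_\M}{0:T}}$ as $K \to \infty$. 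The required regularity reduces to boundedness and strict positivity of each $G_t$, which hold because $G_t$ is a Gaussian density in $\dt{\X}{t}$ evaluated at the fixed point $\dt{\breve \X_\M}{t-1}$.

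\textbf{Main obstacle.} I expect the principal effort to be in the conditional-independence bookkeeping that isolates the incremental weight as the clean form $\ptheta(\dt{\X_\M}{t-1} \mid \dt{\X}{t})$ rather than a more complicated function of the full motif trajectory; once this is in place the rest is recognition and citation. A secondary concern is ensuring that the statement we invoke covers residual resampling (\Cref{alg:residual_resample}), rather than only multinomial resampling; this extension is standard in the SMC literature but should be pinpointed in the reference to avoid ambiguity.
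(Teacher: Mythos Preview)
Your proposal is correct and follows essentially the same route as the paper: cast \samplingname{} as the bootstrap particle filter for a Feynman--Kac model with transition kernel $\pscaf{t}{\cdot\mid \dt{\X}{t+1}}$ and potential $G_t(\dt{\X}{t})=\pmotif{t-1}{\dt{\X_\M}{t-1}\mid \dt{\X}{t}}$, verify that the resulting target at $t=1$ is $\pscaf{1}{\cdot\mid \dt{\X_\M}{0:T}}$, and invoke \citet[Proposition~11.4]{chopin2020introduction}. The only cosmetic difference is that the paper identifies the target by writing out the full product $\mathbb{M}_T G_T\prod M_i G_i$ and simplifying, whereas you derive it via a one-step backward recursion; your added remarks on the positivity/boundedness of $G_t$ and on the residual-resampling extension (which the paper handles by citing \citet{chopin2004central}) are welcome but do not change the strategy.
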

\begin{proof}
The proof of the lemma follows from an application of standard asymptotics for particle filtering \citep[Proposition 11.4]{chopin2020introduction}.
In particular, to apply Proposition 11.4 we use the formalism of Feynman--Kac (FK) models, following the notation of \citep[Chapter 5]{chopin2020introduction}.
Though typically (and in \citep{chopin2020introduction}) FK models are defined via a sequence of approximations at increasing time steps,
we consider decreasing time steps because we are approximating the reverse time process.
We take the initial distribution as 
$\mathbb{M}_T(\dt{\X_\Sc}{T})=\pscaf{T}{
\dt{\X_\Sc}{T}},$
the transition kernel as
$M_t(\dt{\X_\Sc}{t+1},\dt{\X_\Sc}{t})=
\pscaf{t}{\dt{\X_\Sc}{t}\mid \dt{\X}{t+1}},$
and the potential functions as 
$G_t(\dt{\X_\Sc}{t})=\pmotif{t-1}{\dt{\X_\M}{t-1}\mid \dt{\X}{t}}.$
The sequence of FK models, $\mathbb{Q}_t,$ then correspond to
$$
\mathbb{Q}_t(\dt{\X_\Sc}{t:T}) = L_t\inv\mathbb{M}_T(\dt{\X_\Sc}{T}) G_T(\dt{\X_\Sc}{T})\prod_{i=T-1}^t M_i(\dt{\X_\Sc}{i+1},\dt{\X_\Sc}{i}) G_i(\dt{\X_\Sc}{i})
$$
for each $t,$ where $L_t$ is a normalizing constant.

By substituting in our choices of $M_t$ and $G_t,$
we can rewrite and simplify $\mathbb{Q}_t$ as 
\(
\mathbb{Q}_t (\dt{\X_\Sc}{t:T})
&=  L_t\inv \pscaf{T}{\dt{\X_\Sc}{T}}\pmotif{T-1}{\dt{\X_\M}{T-1}\mid \dt{\X}{T}} 
\prod_{i=T-1}^t\pscaf{i}{\dt{\X_\Sc}{i}\mid \dt{\X}{i+1}}\pmotif{i-1}{\dt{\X_\M}{i-1}\mid \dt{\X}{i}}\\
&=  L_t \inv \pscaf{T}{\dt{\X_\Sc}{T}}\pall{t:T-1}{\dt{\X}{t:T-1}\mid \dt{\X}{T}}\pmotif{t-1}{\dt{\X_\M}{t-1}\mid \dt{\X}{t}}\\
&\propto \pall{t:T}{\dt{\X}{t:T}\mid \dt{\X_\M}{t-1}}\\
&\propto \pscaf{t:T}{\dt{\X_\Sc}{t:T}\mid \dt{\X_\M}{t-1:T}},
\)
where lines 3 and 4 drop multiplicative constants that do not depend on $\dt{\X_\Sc}{t:T}.$
From the above derivation, we see that each
$\mathbb{Q}_t (\dt{\X_\Sc}{t}) = \pscaf{t}{\dt{\X_\Sc}{t} \mid \dt{\X_\M}{t-1:T}},$
and in particular that 
$\mathbb{Q}_1 (\dt{\X_\Sc}{1}) = \pscaf{1}{\dt{\X_\Sc}{1} \mid \dt{\X_\M}{0:T}}.$
As such, the desired convergence in the statement of the lemma is equivalent to that $\dt{\pr_K}{1}$ converges to $\mathbb{Q}_1.$

\citet[Proposition 11.4]{chopin2020introduction} provide this result for the generic particle filtering algorithm (see \citet[Algorithm 10.1]{chopin2020introduction}, which is written in the FK model form described above).
More specifically, Proposition 11.4 proves almost sure convergence of all Borel measurable functions of $\dt{\pr_K}{t},$ which implies the desired weak convergence.

Although the proof provided in \citet{chopin2020introduction} is restricted to the simpler, but higher variance, case where the resampling step uses multinomial resampling, the authors note that \citet{chopin2004central} proves it holds in the case of residual resampling (which we use in our experiments) as well.
\end{proof}

\textbf{Replacement method error --- lemmas and proofs}

We here provide proofs of \Cref{prop:KL_div} and \Cref{corollary:diverging_error}.

\textbf{Proof of \Cref{prop:KL_div}:}
\begin{proof}
The result obtains from 
recognizing where the replacement method approximation agrees with the forward process,
using conditional independences in both processes,
and applying the chain rule for KL divergences.
We make this explicit in the derivation below,
with comments explaining the transition to the following line.
\(
&\KL{
\qall{1:T}{\cdot\mid \dt{\X_\M}{0}=\Xval_\M}
}{
\preplace{1:T}{\cdot\mid \dt{\X_\M}{0}=\Xval_\M}
}\\
&=\int \qall{1:T}{\dt{\Xval}{1:T}\mid \dt{\X_\M}{0}=\Xval_\M}  
\log \frac{
    \qall{1:T}{\dt{\Xval}{1:T}\mid \dt{\X_\M}{0}=\Xval_\M}
    }{
    \preplace{1:T}{\dt{\Xval}{1:T}\mid \dt{\X_\M}{0}=\Xval_\M}
    }
d \dt{\Xval}{1:T} \\
&\textrm{//\ By the chain rule of probability.} \\
&=\int \qall{1:T}{\dt{\Xval}{1:T}\mid \dt{\X_\M}{0}=\Xval_\M}  
\Big[
\log \frac{
    \qmotif{1:T}{\dt{\Xval_\M}{1:T}\mid \dt{\X_\M}{0}=\Xval_\M}
    }{
    \preplace{1:T}{\dt{\Xval_\M}{1:T}\mid \dt{\X_\M}{0}=\Xval_\M}
    } +\\
    &\ \ \ \ \ \  \ \ \ \ \ \ \ \ 
    \log \frac{
    \qscaf{1:T}{\dt{\Xval_\Sc}{1:T}\mid \dt{\X_\M}{0:T}=\dt{\Xval_\M}{0:T}}
    }{
    \preplacescaf{1:T}{\dt{\Xval_\Sc}{1:T}\mid \dt{\X_\M}{0:T}=\dt{\Xval_\M}{0:T}}
    }
\Big]\\
&\textrm{//\ By the agreement of }q \textrm{ and } p_{\replace} \textrm{ on the motif, and the chain rule of probability.} \\
&=\int \qall{1:T}{\dt{\Xval}{1:T}\mid \dt{\X_\M}{0}=\Xval_\M}  
\Big[
\log \frac{
    \qscaf{T}{\dt{\Xval_\Sc}{T}\mid \dt{\X_\M}{0:T}=\dt{\Xval_\M}{0:T}}
    }{
    \preplace{T}{\dt{\Xval_\Sc}{T}\mid \dt{\X_\M}{0:T}=\dt{\Xval_\M}{0:T}}
    } +\\
    &\ \ \ \ \ \ 
    \sum_{t=1}^{T-1}\log \frac{
    \qscaf{t}{\dt{\Xval_\Sc}{t}\mid \dt{\X_\Sc}{t+1}=\dt{\Xval_\Sc}{t+1}, \dt{\X_\M}{0:T}=\dt{\Xval_\M}{0:T}}
    }{
    \preplacescaf{t}{\dt{\Xval_\Sc}{t}\mid \dt{\X_\Sc}{t+1}=\dt{\Xval_\Sc}{t+1}, \dt{\X_\M}{0:T}=\dt{\Xval_\M}{0:T}}
    }
\Big]
d \dt{\Xval}{1:T} \\
&\textrm{//\ Because }\qscaf{T}{\cdot}=\preplacescaf{T}{\cdot}=\mathcal{N}(\cdot;0,I)
\textrm{ and the assumption that }\ptheta \textrm{ matches }q. \\
&=\int \qall{1:T}{\dt{\Xval}{1:T}\mid \dt{\X_\M}{0}=\Xval_\M}  
\Big[
    \sum_{t=1}^{T-1}\log \frac{
    \qscaf{t}{\dt{\Xval_\Sc}{t}\mid \dt{\X}{t+1}=\dt{\Xval}{t+1}, \dt{\X_\M}{0}=\dt{\Xval_\M}{0}}
    }{
    \qscaf{t}{\dt{\Xval_\Sc}{t}\mid \dt{\X}{t+1}=\dt{\Xval}{t+1}} 
    }
\Big]
d \dt{\Xval}{1:T} \\
&=\sum_{t=1}^{T-1}
\EKL{
    \qscaf{t}{\cdot \mid \dt{\X}{t+1}, \dt{\X_\M}{0}=\dt{\Xval_\M}{0}}
    }{
    \qscaf{t}{\cdot \mid \dt{\X}{t+1}}
}.
\)
\end{proof}
\textbf{Proof of \Cref{corollary:diverging_error}:}

The proof of the corollary relies of on a lemma on the variances of the two relevant conditional distributions.
We state this lemma, whose proof is at the end of the section, before continuing.
For notational simplicity, we drop the scripts and annotations on $\dt{\bar \alpha}{t}$ and $\dt{\beta}{t+1},$
and instead write $\alpha$ and $\beta,$ respectively.
\begin{lemma}\label{lemma:variances_and_bound}
Suppose $\dt{\X_\M}{0},\dt{\X_\Sc}{t},$ and $\dt{\X_\Sc}{t+1}$ are distributed as in \Cref{corollary:diverging_error}.
Then $\Var[\dt{\X_\Sc}{t}\mid \dt{\X_\Sc}{t+1}]=\beta$
and 
$\Var[\dt{\X_\Sc}{t}\mid \dt{\X_\Sc}{t+1}, \dt{\X_\M}{0}]\le \beta(1-\beta\rho^2\alpha).$
\end{lemma}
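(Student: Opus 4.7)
The plan is to exploit the joint Gaussianity of $(\dt{\X_\M}{0}, \dt{\X_\Sc}{0}, \dt{\X_\Sc}{t}, \dt{\X_\Sc}{t+1})$, which follows because $\dt{\X_\Sc}{t}$ and $\dt{\X_\Sc}{t+1}$ are linear combinations of $\dt{\X_\Sc}{0}$ and independent Gaussian noises. Under joint Gaussianity every conditional variance is deterministic and given by the Schur-complement formula $\Var[Y\mid Z] = \Sigma_{YY}-\Sigma_{YZ}\Sigma_{ZZ}^{-1}\Sigma_{ZY}$, so the entire proof reduces to (i) collecting the relevant second moments and (ii) a single algebraic manipulation at the end.

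First I would compute the required covariance entries. The variance-preserving schedule gives $\Var(\dt{\X_\Sc}{t}) = \Var(\dt{\X_\Sc}{t+1}) = 1$, while chaining the transition kernels yields $\Cov(\dt{\X_\Sc}{t},\dt{\X_\Sc}{t+1}) = \sqrt{1-\beta}$, $\Cov(\dt{\X_\Sc}{t},\dt{\X_\M}{0}) = \sqrt{\alpha}\,\rho$, and $\Cov(\dt{\X_\Sc}{t+1},\dt{\X_\M}{0}) = \sqrt{(1-\beta)\alpha}\,\rho$. The first claim is then immediate from the scalar conditioning formula: $\Var[\dt{\X_\Sc}{t}\mid \dt{\X_\Sc}{t+1}] = 1 - (\sqrt{1-\beta})^2/1 = \beta$.

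For the bound on $\Var[\dt{\X_\Sc}{t}\mid \dt{\X_\Sc}{t+1},\dt{\X_\M}{0}]$, the cleanest route is a two-step conditioning. First condition on $\dt{\X_\M}{0}$ alone, using $\dt{\X_\Sc}{0}\mid\dt{\X_\M}{0} \sim \mathcal{N}(\rho\,\dt{\X_\M}{0},\,1-\rho^2)$; this makes $(\dt{\X_\Sc}{t},\dt{\X_\Sc}{t+1})\mid\dt{\X_\M}{0}$ bivariate Gaussian with variances $1-\alpha\rho^2$ and $1-(1-\beta)\alpha\rho^2$ and cross-covariance $\sqrt{1-\beta}(1-\alpha\rho^2)$. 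A second application of the scalar formula should then yield the closed-form
\[
\Var[\dt{\X_\Sc}{t}\mid \dt{\X_\Sc}{t+1},\dt{\X_\M}{0}] \;=\; \frac{\beta(1-\alpha\rho^2)}{1-(1-\beta)\alpha\rho^2}.
\]

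The only remaining step is to verify $\frac{\beta(1-\alpha\rho^2)}{1-(1-\beta)\alpha\rho^2} \le \beta(1-\beta\alpha\rho^2)$. Writing $a = \alpha\rho^2 \in [0,1]$, the denominator $1-(1-\beta)a$ is strictly positive since $\beta \in (0,1)$, so I can cross-multiply; expanding reduces the inequality to $0 \le \beta(1-\beta)a^2$, which is trivial. I expect this final algebraic reduction to be the only non-routine piece — everything else is standard Gaussian bookkeeping. The one subtlety worth double-checking is keeping track of which noise variables remain independent after each conditioning step, since swapping the order of these conditionings is what cleanly linearizes the problem.
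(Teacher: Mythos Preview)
Your proposal is correct and follows essentially the same route as the paper: both establish joint Gaussianity, condition first on $\dt{\X_\M}{0}$ to obtain the same $2\times 2$ conditional covariance matrix for $(\dt{\X_\Sc}{t},\dt{\X_\Sc}{t+1})$, and then apply the scalar Schur complement. The only difference is cosmetic: the paper writes the resulting conditional variance in a less reduced form and bounds it via the intermediate inequality $c/d \ge 1-(d-c)/c$ (equivalently $(c-d)^2\ge 0$), whereas you first simplify to the clean ratio $\beta(1-\alpha\rho^2)/(1-(1-\beta)\alpha\rho^2)$ and cross-multiply to reach $0\le\beta(1-\beta)a^2$ directly---your algebra is a bit more transparent but the argument is the same.
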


Now we provide a proof of \Cref{corollary:diverging_error}.
\begin{proof}
First recall that 
\(
\KL{
\mathcal{N}(\mu_1, \sigma_1^2)
}{
\mathcal{N}(\mu_2, \sigma_2^2)
}
&=\frac{1}{2}\left(\log\frac{\sigma_2^2}{\sigma_1^2}
+\frac{\sigma_1^2 + (\mu_1-\mu_2)^2}{\sigma_2^2} -1\right)\\
&\ge\frac{1}{2}\left(\log\frac{\sigma_2^2}{\sigma_1^2}
+\frac{\sigma_1^2}{\sigma_2^2} -1\right)
\)
and observe that this lower bound is monotonically decreasing in $\sigma_1^2$ for $\sigma_1^2 \le \sigma_2^2.$
Therefore
\(
&\EKL{
\qscaf{t}{\cdot  \mid \dt{\X_\Sc}{t+1}, \dt{\X_\M}{0}}
}{
\qscaf{t}{\cdot \mid \dt{\X_\Sc}{t+1}}
}\\
&= \int \qmotif{0}{\dt{\Xval_\M}{0}}\qscaf{t+1}{\dt{\Xval_\Sc}{t+1}\mid \dt{\Xval_\M}{0}}\Big[\\
&\ \ \ \KL{
\qscaf{t}{\cdot \mid \dt{\X_\Sc}{t+1}=\dt{\Xval_\Sc}{t+1}, \dt{\X_\M}{0}=\dt{\Xval_\M}{0}]}
}{
\qscaf{t}{\cdot \mid \dt{\X_\Sc}{t+1}=\dt{\Xval_\Sc}{t+1}]}
}\\
&\Big]d \dt{\Xval_\M}{0}\dt{\Xval_\Sc}{t+1}\\
&\ge \int \qmotif{0}{\dt{\Xval_\M}{0}}\qscaf{t+1}{\dt{\Xval_\Sc}{t+1}\mid \dt{\Xval_\M}{0}}\Big[\\
&\ \ \ \KL{
\mathcal{N}(0,\Var[\dt{\X_\Sc}{t} \mid \dt{\X_\Sc}{t+1}=\dt{\Xval_\Sc}{t+1}, \dt{\X_\M}{0}=\dt{\Xval_\M}{0}])
}{
\mathcal{N}(0, \Var[\dt{\X_\Sc}{t} \mid \dt{\X_\Sc}{t+1}=\dt{\Xval_\Sc}{t+1}])
}\\
&\Big]d \dt{\Xval_\M}{0}\dt{\Xval_\Sc}{t+1}\\
&\ge \KL{
\mathcal{N}(0,\beta(1-\beta\rho^2\alpha))
}{
\mathcal{N}(0,\beta)
}\\
&\ge \frac{1}{2}\left(
    \log \frac{\beta}{\beta(1-\beta\rho^2\alpha)}  + 
    \frac{\beta(1-\beta\rho^2\alpha)}{\beta} - 
    1
\right) \\
&= -\frac{1}{2}\left(
    \log (1-\beta\rho^2\alpha)  + 
    \beta\rho^2\alpha
\right) 
\)
where the second inequality follows from \Cref{lemma:variances_and_bound},
and the monotonicity of the KL in $\sigma_1^2.$
\end{proof}

\textbf{Proof of \Cref{lemma:variances_and_bound}:}
\begin{proof}
That $\Var[\dt{\X_\Sc}{t}\mid \dt{\X_\Sc}{t+1}]=\beta$ follows immediately from that 
$[\dt{\X_\Sc}{t},\dt{\X_\Sc}{t+1}]$ is marginally bivariate normal distributed with covariance $\sqrt{1-\beta}.$

The upper bound on $\Var[\dt{\X_\Sc}{t}\mid \dt{\X_\Sc}{t+1}, \dt{\X_\M}{0}]$ is trickier.
Observer that $[\dt{\X_\Sc}{t},\dt{\X_\Sc}{t+1}]\mid \dt{\X_\M}{0}$ is bivariate Gaussian and that 
$$
\Var\left[ \begin{bmatrix}
\dt{\X_\Sc}{t}\\
\dt{\X_\Sc}{t+1}
\end{bmatrix} \mid \dt{\X_\M}{0}
\right] = 
\begin{bmatrix}
1-\rho^2 \alpha  & \sqrt{1-\beta}(1-\rho^2\alpha) \\
\sqrt{1-\beta}(1-\rho^2\alpha) & 1 + \beta \rho^2 \alpha - \rho^2 \alpha 
\end{bmatrix}.
$$
As such, the conditional variance may be computed in closed form as 
$\Var[\dt{\X_\Sc}{t}\mid \dt{\X_\Sc}{t+1}, \dt{\X_\M}{0}]
=\beta(1-\rho^2\alpha) + (1-\beta)(1-\rho^2\alpha)\left(1- (1-\rho^2\alpha)/(1-\rho^2\alpha + \beta \rho^2\alpha) \right).$
But since $(1-\rho^2\alpha)/(1-\rho^2\alpha+\beta\rho^2\alpha)
\ge 1- (\beta\rho^2\alpha)/(1- \rho^2\alpha)$
and therefore
$1-(1-\rho^2\alpha)/(1-\rho^2\alpha+\beta\rho^2\alpha)
\le (\beta\rho^2\alpha)/(1- \rho^2\alpha)$
we can write
\(
\Var[\dt{\X_\Sc}{t}\mid \dt{\X_\Sc}{t+1}, \dt{\X_\M}{0}]
&= \beta(1-\rho^2\alpha) + (1-\beta)(1-\rho^2\alpha)\left(1- \frac{1-\rho^2\alpha}{1-\rho^2\alpha + \beta \rho^2\alpha} \right)\\
&\le \beta(1-\rho^2\alpha) + (1-\beta)(1-\rho^2\alpha)\frac{\beta\rho^2\alpha}{1- \rho^2\alpha})\\
&= \beta(1-\rho^2\alpha) + (1-\beta)\beta\rho^2\alpha\\
&= \beta(1-\beta \rho^2\alpha).
\)
\end{proof}

\section{Detecting chirality}\label{sec:chirality}
\Cref{sec:discussion} noted the limitation of \modelname{} that it can generate left-handed helices (which do not stably occur in natural proteins).
\Cref{fig:left_handed_examples} presents two such examples.
We additionally note that, as in \Cref{fig:left_handed_examples} Left, model samples can include multiple helices with differing chirality.

\begin{figure}[H]
    \centering
    \includegraphics[width=0.8\textwidth]{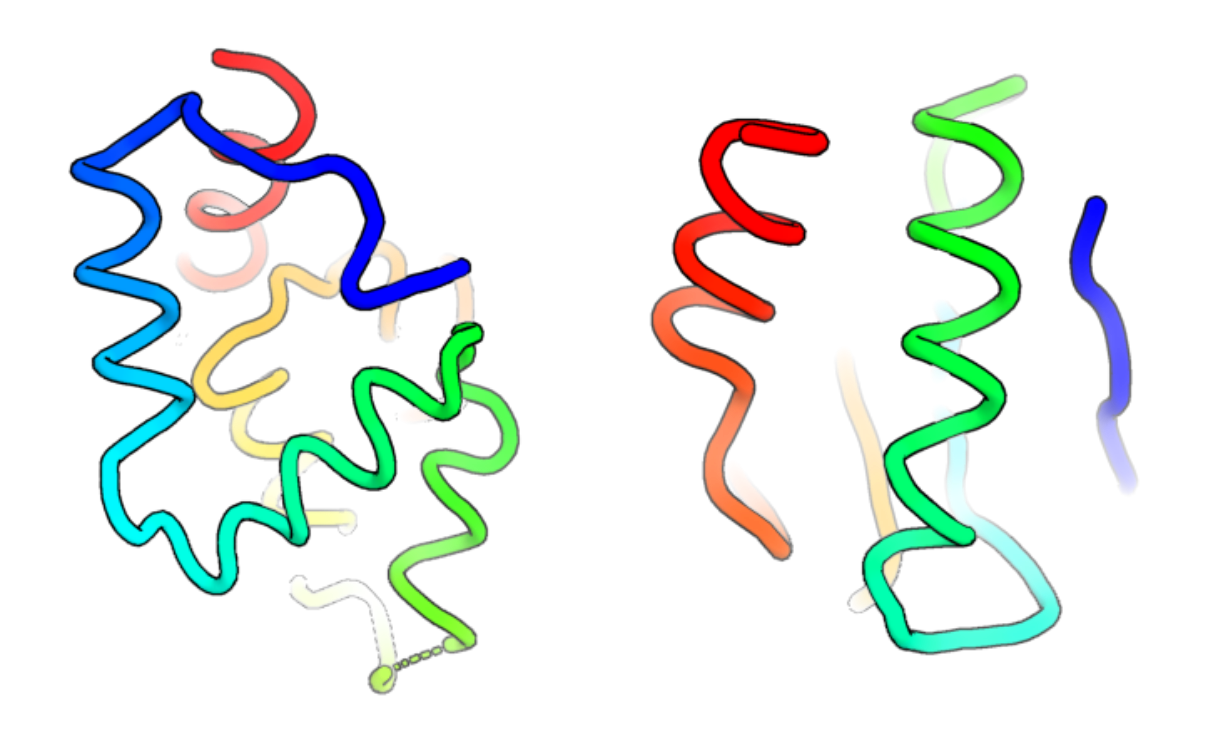}
    \caption{Two examples of protein backbone samples with incorrect left handed helices.}
    \label{fig:left_handed_examples}
\end{figure}

\section{Training details}\label{sec:training}

\modelname{} uses 4 equivariant graph convolutional layers (EGCL) with 256 dimensions for node and edge embeddings. 
The training data was restricted to single chain proteins (monomers) found in PDB and lengths in the range [40, 128].
We additionally filtered out PDB with >5\AA{} atomic resolution.
This amounted to 4269 training examples.
Training was performed using the Adam optimizer with hyperparameters \texttt{learning\_rate}=1e-4, $\beta_1=0.9$, and $\beta_2=0.999$.
We trained for 1,000,000 steps using batch size 16.
We used a single Nvidia A100 GPU for approximately 24 hours.
We implemented all models in PyTorch.
We used the same linear noise schedule as \citet{ho2020denoising} where $\beta_0 = 0.0001$, $\beta_T = 0.02$, and $T=1024$.
We did not perform hyperparameter tuning.

% For training, we restrict the data to single chain proteins (monomers) found in PDB and lengths below 128 residues. All structures with >5\AA \ resolution were removed. This amounted to 592 proteins. Next we performed self-consistency calculations over this set to understand \scRMSD for naturally occurring proteins in PDB. We then compared this distribution to self-consistency over a set of de-novo designed proteins.
% Supplementary figure XXX shows the distribution. Unsurprisingly, de-novo designed proteins exhibited much lower \scRMSD{} because they were computationally designed with similar methodologies. Natural proteins exhibit a wider variance of designability presumably due to the "marginally stable" hypothesis \cite{taverna2002proteins}. Training on non-designable proteins would give the model the incorrect inductive bias. 
% Therefore, we filtered out all proteins with >2\AA{} \scRMSD{} using TM-align which amounted to 433 proteins. A limitation of our set-up is the relatively small amount of training data. Many monomers in PDB are larger than 128 residues but scaling \modelname \ to proteins larger than 128 proved to be challenging and remains to be a direction of future work.

\section{Additional metric details}\label{sec:additional_metrics}
\textbf{Self-consistency algorithm.}
\Cref{sec:sc} described our self-consistency metrics for evaluating the designability of backbones generated with \modelname{}.
\Cref{alg:self_consistency} makes explicit the procedure we use for computing these metrics.

% \begin{wrapfigure}{R}{0.5\textwidth}
\begin{figure}[H]
\begin{minipage}{0.5\textwidth}
  \begin{algorithm}[H]
    \caption{Self-consistency calculation}
    \label{alg:self_consistency}
    \begin{algorithmic}[1]
        \Require $\X \in \mathbb{R}^{N, 3}$ 
        \For{$i \in 1,\dots,8$}
            \State $\Seqval_i \gets$ \texttt{ProteinMPNN}($\X$)
            \State $\hat{\X}_i \gets$ \texttt{AF2}($\Seqval_i$)
        \EndFor
        % \State \texttt{sc\_rmsd} $\gets \min_{i \in 1,\dots,8}$\texttt{RMSD}$(\hat{\X}_i, \X)$
        \State \texttt{sc\_tm} $\gets \max_{i \in 1,\dots,8}$\texttt{TMscore}$(\hat{\X}_i, \X)$
        % \Ensure \texttt{sc\_rmsd}, \texttt{sc\_tm}
        \Ensure, \texttt{sc\_tm}
    \end{algorithmic}
  \end{algorithm}
\end{minipage}
% \end{wrapfigure}
\end{figure}

\paragraph{Using dihedral angles to calculate helix chirality.}
Natural proteins are chiral molecules that contain only right-handed alpha helices. 
However, because the underlying EGNN in our model is equivariant to reflection, it can produce samples with left-handed helices. 
While examining model samples, we additionally observed samples with both left and right-handed helices (\Cref{fig:left_handed_examples}), even though in theory the EGNN should be able to detect and avoid the chiral mismatch. 
Left-handed helices are fundamentally invalid geometries in proteins and represent a trivial failure mode when calculating the self-consistency and other metrics.
Samples with a mixture of left and right-handed helices are especially problematic because they cannot be corrected simply by reflecting the coordinates.
As such, it is important to identify and separate samples with mixed chirality.

To detect chiralty, we compute the dihedral angle between four consecutive C-$\alpha$ atoms as a chiral metric to distinguish between the two helix chiralities.
Algorithmically, for every C-$\alpha$ \texttt{i}, we calculate the dihedral between  C-$\alpha$ \texttt{i}, \texttt{i+1}, \texttt{i+2}, and \texttt{i+3}.
C-$\alpha$ \texttt{i} with dihedral angles between 0.6 and 1.2 radians are classified as right-handed helices, and angles between -1.2 and -0.6 are classified as left-handed helices, with everything else classified as non-helical. Because C-$\alpha$ atoms in native helices tend to form contiguous stretches longer than one residue in the primary sequence, helical stretches less than one amino acid were removed. This filtering is meant to help avoid accidentally counting the occasional isolated backbone geometry that falls into a helical bin as a true helix. 
Finally, for all C-$\alpha$ atoms \texttt{i}  that are still categorized as part of a helix, the associated \texttt{i+1}, \texttt{i+2} and \texttt{i+3} C-$\alpha$ atoms are also counted as part of that helix.

\section{Additional experimental results}\label{sec:additional_results}

In this section, we describe additional results to complement the main text.
We provide a description of the motif targets in \Cref{sec:conditional_sampling}, along with results of a scaffolding failure case in \Cref{supp:additional_motif_scaffold}.
To understand the qualitative outcomes of \scTM, we present additional results of backbone designs, their AF2 prediction, and most closely related PDB parent chain for different thresholds of \scTM{} in \Cref{supp:sctm_qualitative}.
We provide additional examples of latent interpolations in \Cref{supp:latent}.
Finally, \Cref{supp:clustering} presents a structural clustering of unconditional backbone samples; this result provides further evidence of \modelname{}'s ability to generate diverse backbone structures.

\begin{figure}[H]
    \centering
    \includegraphics[width=0.9\textwidth]{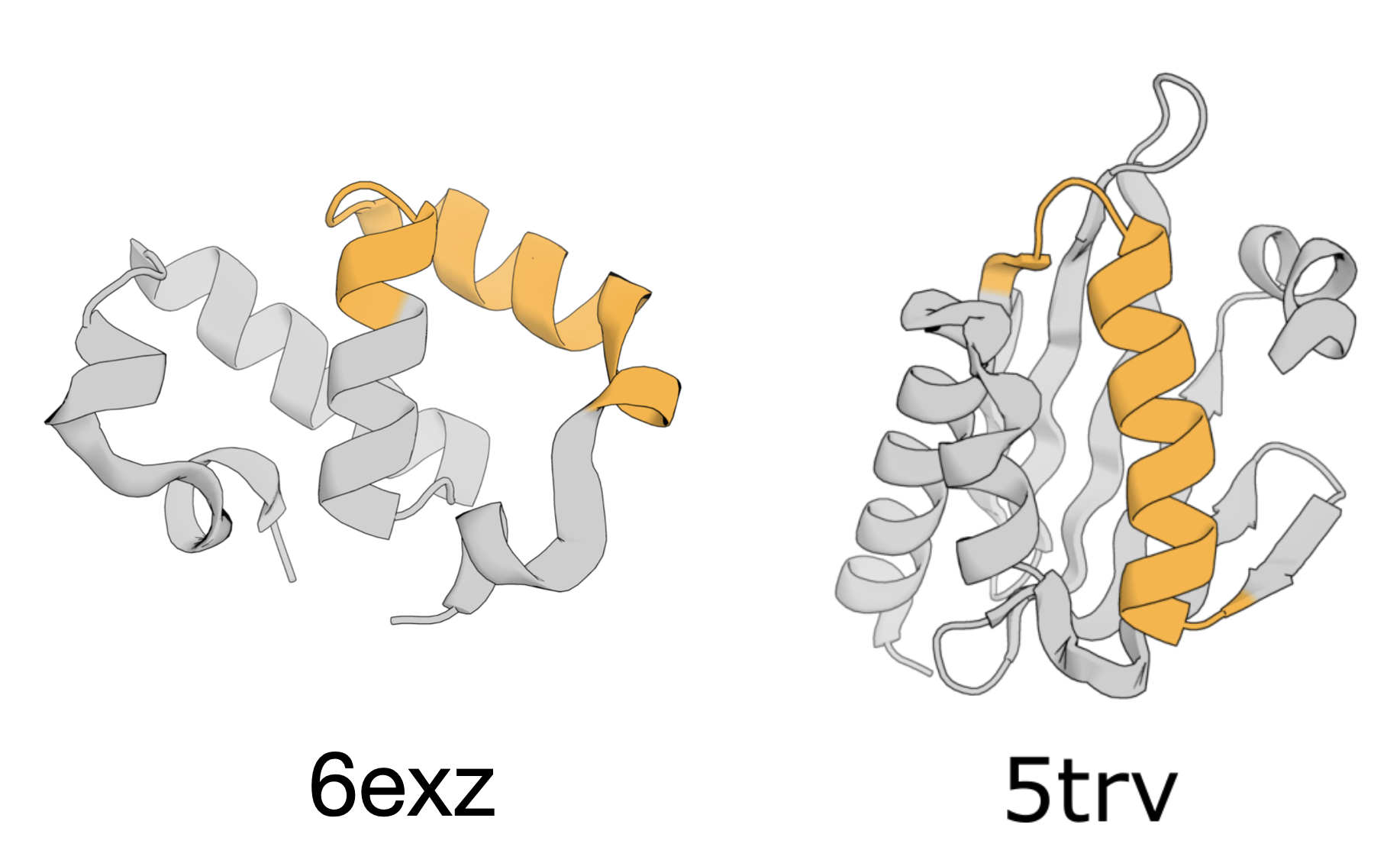}
    \caption{Structures used for motif-scaffolding test cases. Native structures (grey) and their motifs (orange) that were used for the motif-scaffolding task are shown.}
    \label{fig:wt_motifs}
\end{figure}

\subsection{Additional motif-scaffolding results}
\label{supp:additional_motif_scaffold}
We here provide additional details of the motif-scaffolding experiments described in \Cref{sec:experiments}.
\Cref{table:inpainting_targets} specifies the total lengths, motif sizes, and motif indices of our test cases.
In \Cref{fig:wt_motifs} we depict the structures of the native proteins (6exz and 5trv) from which the motifs examined quantitatively in the main text were extracted.
\Cref{fig:failure_modes} analyzes commonly observed failure modes of \modelname{} backbone samples involving chain breaks, steric clashes, and incorrect chirality.

\Cref{fig:rsv_failure} presents quantitative results on a harder inpainting target.
% In this case, the motif is obtained from the respiratory syncytial virus (RSV) protein \texttt{5tpn},
% and the motif is residues 163-181 of chain A of of native structure.
In this case, the motif is defined as residues 163--181 of chain A of respiratory syncytial virus (RSV) protein (PDB ID: 5tpn).
We attempted to scaffold this motif into a 62 residue protein, with the motif as residues 42--62.
We chose this placement because previous work \citep{wang2021deep} identified a promising candidate scaffold with this motif placement.
In contrast to the cases described in the main text,
for which a suitable scaffold exists in the training set,
\samplingname{} and the other inpainting methods failed to identify scaffolds that recapitulated this motif to within a motif RMSD of 1 \AA.

\begin{table}[H]
\centering
    \caption{Motif-scaffolding test case additional details.}
    \label{table:inpainting_targets}
    \begin{tabular}{ |p{3.0cm}||p{2cm}|p{4cm}|  }
     \hline
     Origin/ Protein & Total length & Motif size (residue range)\\
     \hline
     \hline
    6exz &   72  & 15 (30--44) \\
     \hline
    % `6exz` &   69  & 10 (25-35) \\
    %  \hline
    % `5ci9` &   116  & 20 (60-80) \\
    %  \hline
    5trv &   118  & 21 (42--62) \\
     \hline
    \texttt{RSV} (PDB-ID: 5tpn) &  62  & 19 (16--34) \\
     \hline
    \texttt{EF-hand} (PDB-ID: 1PRW) &  53  & 5 (0--4), 13 (31--43) \\
     \hline
    \end{tabular}
\end{table}

\begin{figure}[H]
    \centering
    \includegraphics[width=0.9\textwidth]{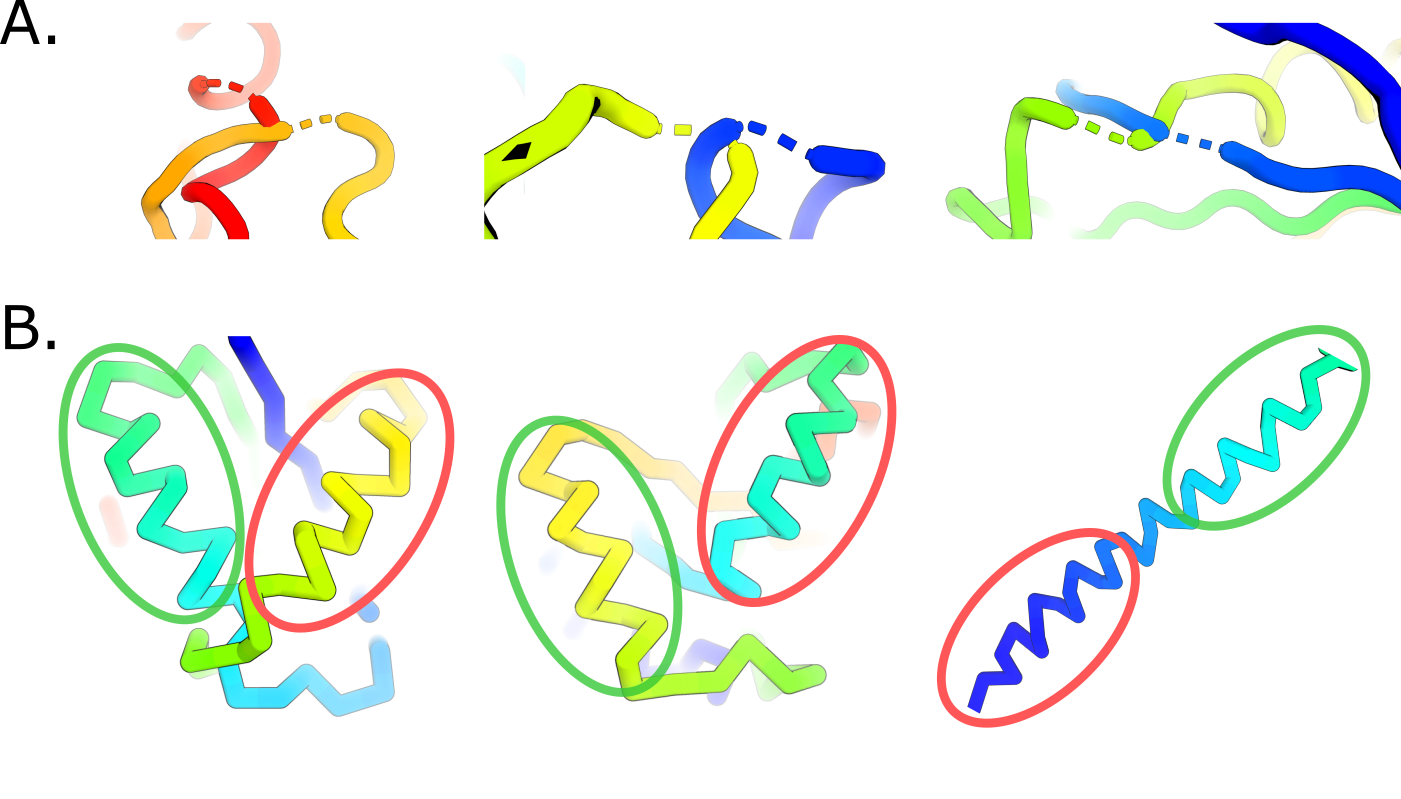}
    \caption{Failure modes in \modelname{} backbone samples. (A) Backbone clashes and chain breaks. The C-$\alpha$ atoms can be spaced further than the typical 3.8\AA{} between neighbors, resulting in a chain break (dashed lines). Additionally, backbone segments can be too close to each other, resulting in obvious overlaps and clashes. (B) Backbones with a mixture of left (circled in red) and right (circled in green) handed helices. These chirality errors cannot be corrected simply by mirroring the sampled backbone.}
    \label{fig:failure_modes}
\end{figure}

\begin{figure}[H]
    \centering
    \includegraphics[width=0.9\textwidth]{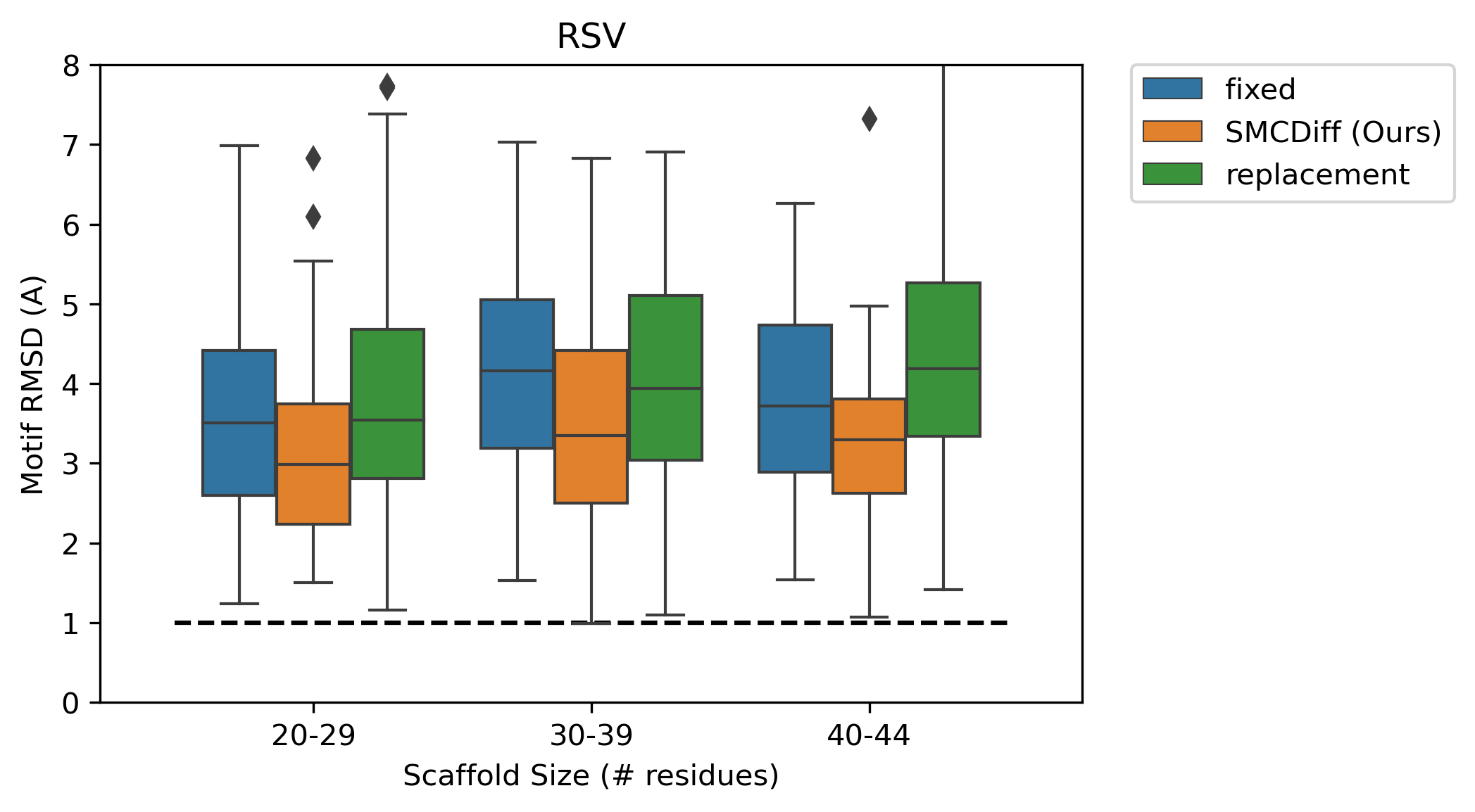}
    \includegraphics[width=0.9\textwidth]{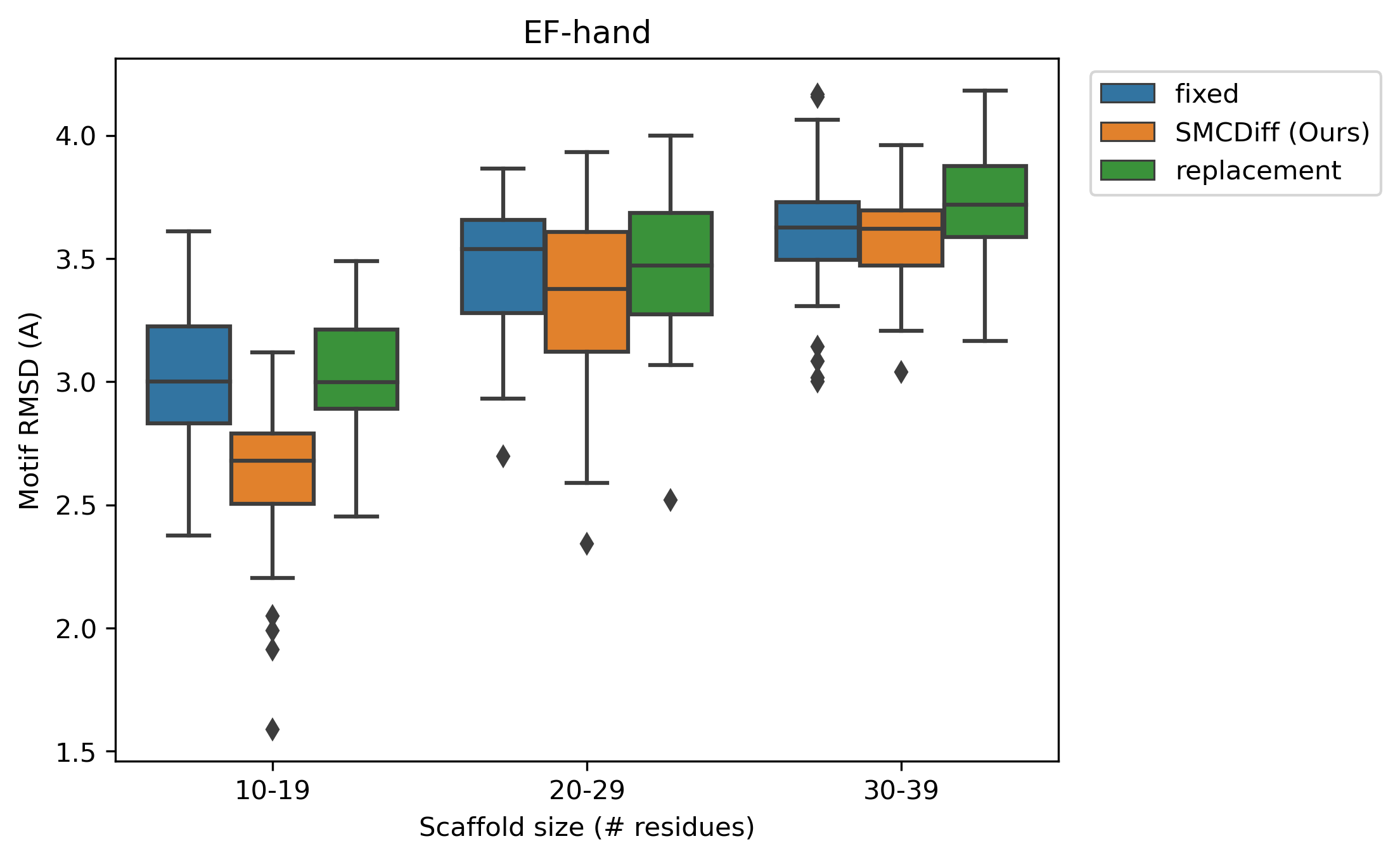}
    \caption{Additional inpainting results on a more challenging motif extracted from the respiratory syncytial virus (RSV) and EH-hand motif.
    The three inpainting methods are evaluated as described in \Cref{sec:experiments}.
    }
    \label{fig:rsv_failure}
\end{figure}

\subsection{Qualitative analysis of \scTM{} in different ranges}
\label{supp:sctm_qualitative}

In this section, we give intuition for backbone designs and AF2 predictions associated with different values of \scTM{} to aid the interpretation of the \scTM{} results provided in \Cref{sec:experiments}.
\Cref{fig:sctm_qualitative} examines a possible categorization of \scTM{} in three ranges. The first two rows correspond to backbone designs that achieve \scTM{} > 0.9. We see the backbone designs in the first column closely match the AF2 prediction in the second column. A closely related PDB example can be found when doing a similarity search of the highest PDB chain with the highest TM-score to the AF2 prediction.
We showed in \Cref{fig:unconditional}B that \scTM{} > 0.9 is indicative of a close structural match being found in PDB.

The middle two rows correspond to designs that achieve \scTM{} $\sim$ 0.5. These are examples of backbone designs on the edge of what we deemed as designable (\scTM{} > 0.5). In these cases, the AF2 prediction shares the same coarse shape as the backbone design but possibly with different secondary-structure ordering and composition. In the length 69 example, we see the closest PDB chain has a TM-score of only 0.65 to the AF2 prediction but roughly the same secondary-structure ordering as the backbone design.
The length 100 sample is a similar case of AF2 producing a roughly similar shape to the backbone design, but has no matching monomer in PDB.
% Designs in this category often do not have a closely matching chain in PDB, but the variability of the backbone designs and AF2 predictions does not allow for concluding that \modelname{} can produce novel folds.

The final category of \scTM{} < 0.25 reflects failure cases when \scTM{} is low.
% The sequences generated by \protmpnn{} in these cases are generally poor. For instance the sequence of the length 54 sample is \TODO{Insert sequence}.
The AF2 predictions in this case have many disordered regions and bear little structural similarity with the original backbone design.
Similar PDB chains are not found.
We expect that improved generative models of protein backbones would not produce any samples in this category.

\begin{figure}[H]
\centering
\includegraphics[width=0.85\textwidth]{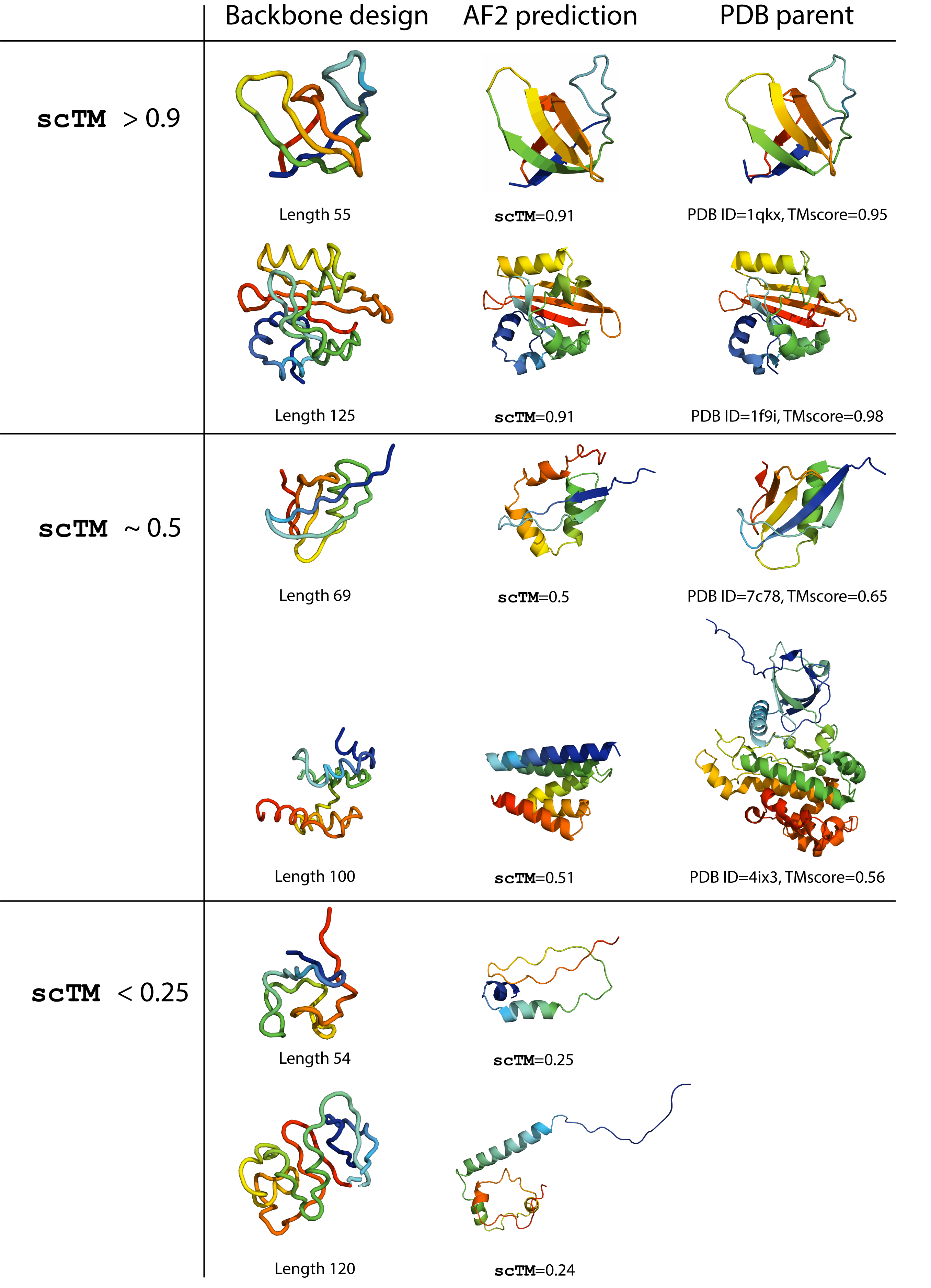}\
    \caption{Qualitative analysis of unconditional backbone samples from \modelname. The first column displays backbone designs from \modelname{} and their sequence lengths.
    The second column displays the highest \scTM{} scoring AF2 predictions from the \protmpnn{} sequences of the corresponding backbone design in the first column. The third column displays the closest PDB chain to the AF2 prediction in the second column with the PDB ID and TM-score written below. The third column is blank for the last two rows since no PDB match could be found.
    See \Cref{supp:sctm_qualitative} for discussion.}
    \label{fig:sctm_qualitative}
\end{figure}

\subsection{Additional latent interpolation results}\label{sec:additional_interpolations}
\label{supp:latent}
We here provide additional latent interpolations.
\Cref{fig:latent_interpolation_len_89,fig:latent_interpolation_len_63} depict interpolations for between model samples for lengths 89 and 63, respectively.

\begin{figure}[H]
    \centering
   \includegraphics[width=0.9\textwidth]{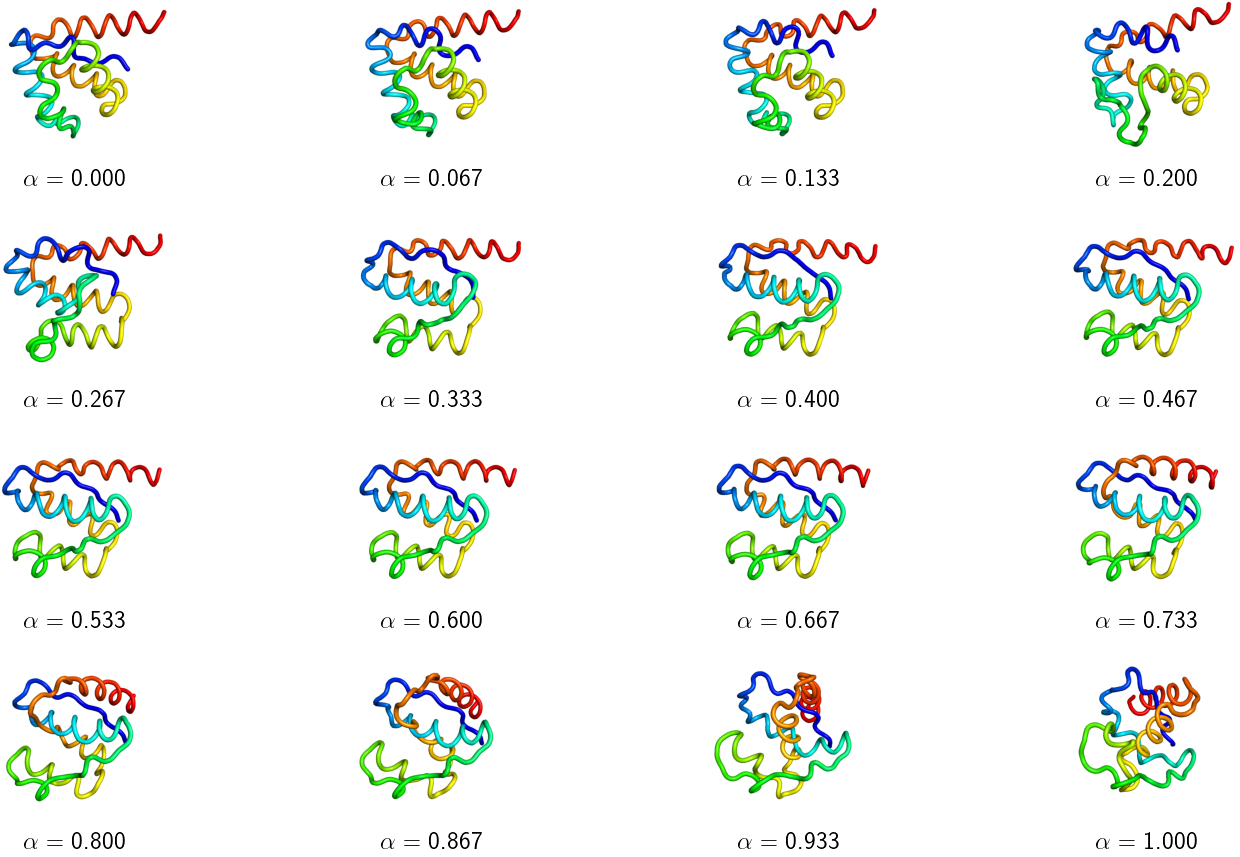}
    \caption{Latent interpolation of length 89 backbone sample from $\alpha=0$ to 1.}
    \label{fig:latent_interpolation_len_89}
\end{figure}

\begin{figure}[H]
    \centering
    \includegraphics[width=0.9\textwidth]{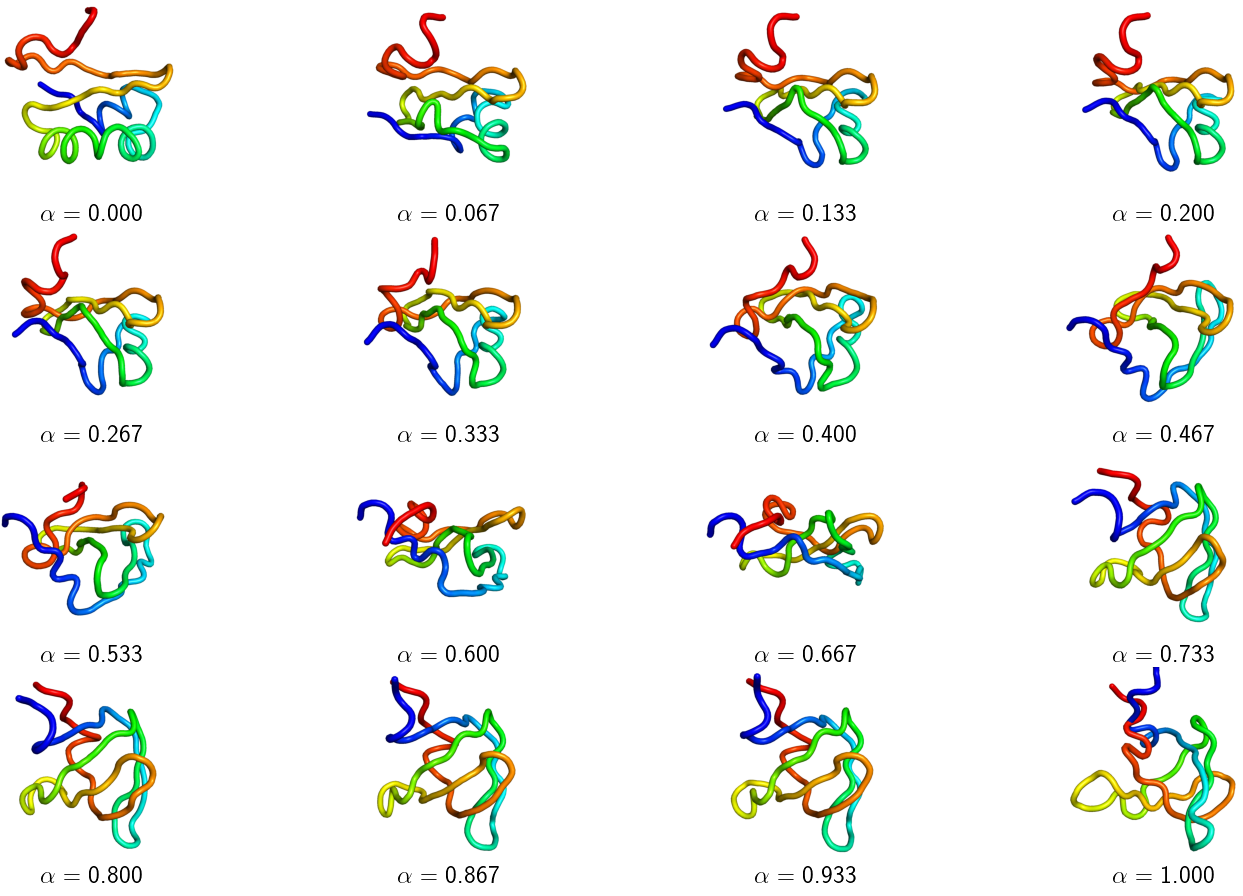}
    \caption{Latent interpolation of length 63 backbone sample from $\alpha=0$ to 1.}
    \label{fig:latent_interpolation_len_63}
\end{figure}

\newpage

\subsection{Structural clustering}
\label{supp:clustering}

All 92 samples with \scTM{} > 0.5 were compared and clustered using MaxCluster \cite{herbertmaxcluster}. Structures were compared in a sequence independent manner, using the TM-score of the maximal subset of paired residues. They were subsequently clustered using hierarchical clustering with average linkage, 1 - TM-score as the distance metric and a TM-score threshold of 0.5 (\Cref{fig:clustering} A).

\begin{figure}[H]
    \centering
    \includegraphics[width=1.0\textwidth]{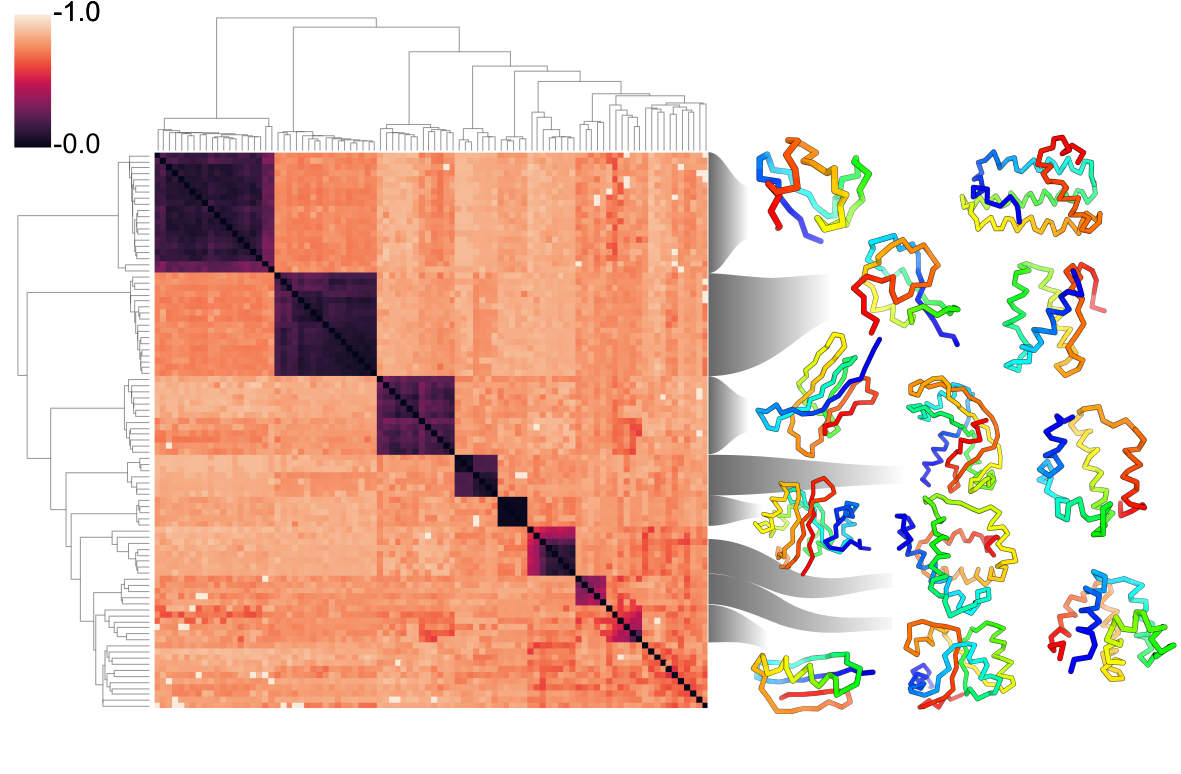}
    \caption{Clustering of self-consistent \modelname{} samples. The distance matrix is 1 - TM-score between pairs of samples, and ranges from 0 (exact matach) to 1 (no match). Dendrograms are from hierarchical clustering using the average distance metric. Designs on the right are cluster centroids. Gray lines connect larger clusters with more than one member to its centroid, while the remaining designs are from a random selection of the remaining single-sample clusters. Protein backbones are colored from blue at the N-terminus to red at the C-terminus.}
    \label{fig:clustering}
\end{figure}

\section{Applicability of \samplingname{} beyond proteins: MNIST inpainting}\label{sec:mnist_inpaint}

Our goal in this section is to study the applicability of \samplingname{} beyond motif-scaffolding,
by applying it to inpainting on the MNIST digits dataset.
We compare \samplingname{} with the replacement method on the task of sampling the remaining half of MNIST digits.
We first train DDPM with $\beta_1 = 10^{-4}, \beta_T=0.2, T=1000$ using a small 8-layer CNN on MNIST with batch size 128 and ADAM optimizer for 100 epochs until it is able to generate reasonable MNIST samples (\Cref{fig:unconditional_mnist}).
We then selected 3 random MNIST images and occluded the right half.
The left half would then serve as the conditioning information to the diffusion model (\Cref{fig:mnist_conditioning}).

\begin{figure}[H]
    \begin{minipage}[c]{0.35\linewidth}
        \includegraphics[width=\linewidth]{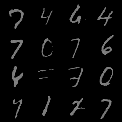}
        \caption{Unconditional MNIST samples.}
        \label{fig:unconditional_mnist}
    \end{minipage}
    \hfill
    \begin{minipage}[c]{0.5\linewidth}
        \includegraphics[width=\linewidth]{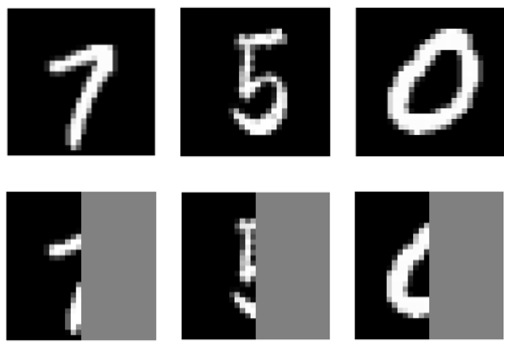}
        \caption{Full MNIST images and their occluded halves used for inpainting experiments.}
        \label{fig:mnist_conditioning}
    \end{minipage}
\end{figure}
\vspace{-10pt}

For each occluded image, we fixed a single forward trajectory and sampled 16 images from each method: replacement method and \samplingname{} with 16 or 64 particles $(K)$. Results are shown in \cref{fig:inpaint_mnist}. We observe the replacement method can sometimes produce coherent samples as a continuation of the conditioning information, but more often it attempts to produce incoherent digits. \samplingname{} on the other hand tends to produce digits that compliment the conditioning information.
For more difficult occlusions, such as 5 and 0, \samplingname{} can still fail although increasing the number of particles ($K=64$) tends to produce samples that are more visually coherent.

It is important to note SMCDiff has additional computation overhead based on the number of particles. It can be more expensive than replacement method but result in higher quality samples. 
Investigating \samplingname{} in more difficult datasets with improved architectures is a direction of future research.

\begin{figure}[H]
    \centering
    \includegraphics[width=0.9\textwidth]{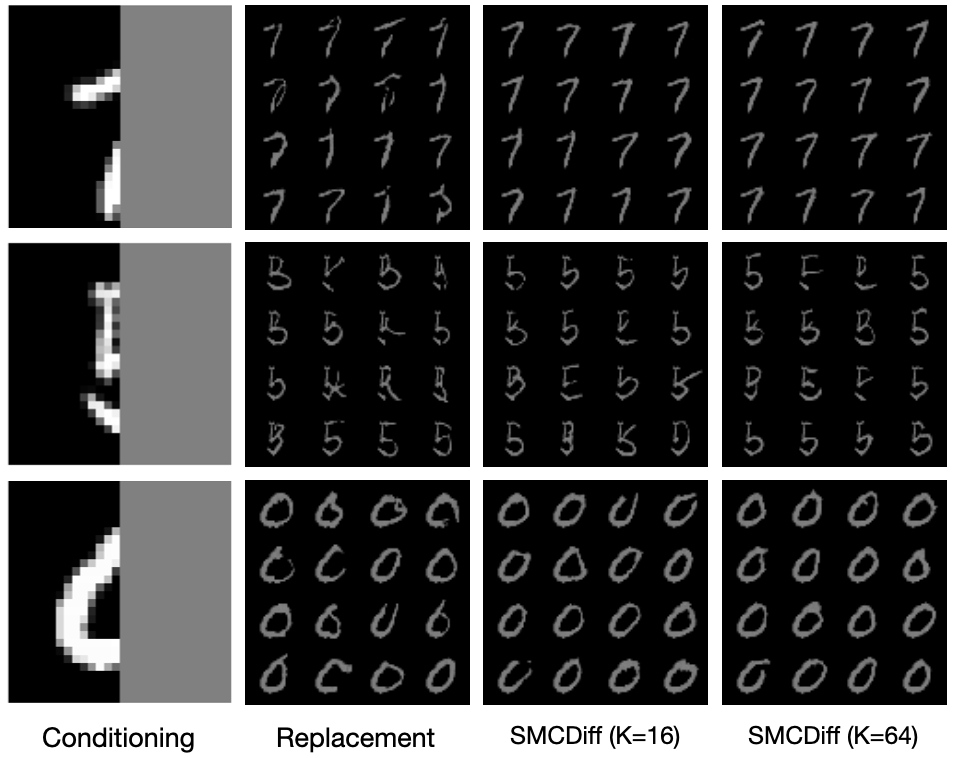}
    \caption{MNIST inpainting results for replacement and \samplingname{}. See text for explanation.}
    \label{fig:inpaint_mnist}
\end{figure}

% \nopagebreak
% \begin{figure}[H]
% \vskip 0.2in
% \hfill
% \subfigure[scTM success rate. \TODO{Success rate with sequence diffusion.}]{\includegraphics[width=0.3\textwidth]{figures/unconditional_sctm.png}}
% \hfill
% \subfigure[Clustering of samples by TM-score. \TODO{Analyze cluster centroids.}]{\includegraphics[width=0.3\textwidth]{figures/unconditional_clusters.png}}
% \hfill
% \subfigure[Similarity to PDB.]{\includegraphics[width=0.3\textwidth]{figures/unconditional_training_set_comparison.png}}
% \hfill
% \caption{s}
% \end{figure}

\end{document}